\DeclareMathOperator{\DP}{dp}
\theoremstyle{plain}
\newtheorem{theorem}{Theorem}[section]
\newaliascnt{lemma}{theorem}
\newtheorem{lemma}[lemma]{Lemma}
\theoremstyle{remark}
\newaliascnt{observation}{theorem}
\newtheorem{observation}[observation]{Observation}
\theoremstyle{definition}
\newaliascnt{definition}{theorem}
\newtheorem{definition}[definition]{Definition}
\title{An $O(n\log n)$ Algorithm for Single-Item Lot Sizing with a One-Breakpoint All-Units Discount and Non-Increasing Prices}
\author{Kleitos Papadopoulos\thanks{Email: \texttt{kleitospa@gmail.com}}}
\date{October 2025}
\setlist[itemize]{topsep=4pt, itemsep=2pt, leftmargin=1.2em}
\setlist[enumerate]{topsep=4pt, itemsep=2pt, leftmargin=1.6em}
\begin{document}
\maketitle

\begin{abstract}
This paper addresses the single-item lot sizing problem with a 1-breakpoint all-units quantity discount in a monotonic setting where the purchase prices are non-increasing over the planning horizon.
For this case, we establish several novel properties of the optimal solution and develop a hybrid dynamic programming approach that maintains a compact representation of the solution space by storing only essential information about the states and using linear equations for intermediate values. Our algorithm runs in \(O(n\log n)\) time, where \(n\) denotes the number of periods. Our result is an improvement over the previous state-of-the-art algorithm, which has an \(O(n^2)\) time complexity.
\end{abstract}

\section{Introduction}
In business, devising optimal production or ordering strategies is important for operational efficiency. Companies that aim to meet demand over a time frame while minimizing their expenses will encounter variants of the lot sizing problem. The problem requires optimizing orders, production costs and storage expenses, given the constraints of limited capacity and complex cost structures. There is extensive research in this field which signifies its importance in manufacturing efficiency.

A closely related problem is the Gas Station Problem of \cite{khuller}, which generalizes a simple variant of the lot sizing inventory problem to graphs. Some of the relevant prior work is shown in the table below.

In this paper, we tackle the single-item lot sizing problem with a 1-breakpoint, all-units quantity discount. We focus on a specific monotonic variant of the problem where unit purchase prices follow a non-increasing trend over the planning horizon and there are no setup costs. We analyze key structural properties of the optimal solution under these conditions. These properties are then used to design a hybrid dynamic programming algorithm of $O(n\log n)$ time complexity, where $n$ denotes the number of periods.

\subsection{Problem Analogy}
To provide intuition, we present our lot sizing problem using a gas station analogy and terminology where:
\begin{itemize}
  \item Time periods correspond to refueling stations
  \item Inventory corresponds to fuel in the tank
  \item Demand corresponds to the distance between stations
  \item Item cost corresponds to the fuel price
  \item \textbf{Capacity (two interpretations).}
  In the literature, capacity is commonly interpreted in one of two ways:
  \begin{itemize}
    \item \emph{Hard (tank) capacity:} the fuel level can never exceed the tank capacity, including immediately
    after refueling (post-order, pre-travel).
    \item \emph{End-of-station storage:} only the fuel level after completing the trip segment (i.e., after
    satisfying demand) is bounded; under this interpretation, the amount of fuel immediately after refueling may
    temporarily exceed the end-of-station bound.
  \end{itemize}
  In this paper we adopt the \emph{hard-capacity}. Switching to the end-of-station bound
  only changes which states are deemed feasible at each station (i.e., the truncation rule for allowable fuel
  levels) and does not affect the structure of the dynamic program or the $O(n\log n)$ time complexity of our
  algorithm.
\end{itemize}
Throughout this paper, we use both the gas station and lot sizing terminologies interchangeably as they are equivalent for our purposes.

\subsection{A brief description of the result}
A naive dynamic program (Appendix, Algorithm~\ref{alg:naive}) can generate optimal costs for every feasible inventory level at each station. These solutions viewed as \emph{solution sets} reveal a certain structure. We formalize these objects and prove structural lemmas (boundedness of a short prefix, monotone legacy-price labels, contiguity of generated points, and single-checkpoint dominance) that let us \emph{compress} whole runs of optimal states into the \emph{segments} below and update them only at a few checkpoints (termini and equal-slope boundaries). We package these checks as \emph{MV thresholds}, and use the lemmas to create our fast procedure.
These groups of states are contained in structures that we call segments where each segment has one explicit state and an equation that can be used to generate its remaining implicit states. The coefficients of these equations as well as the endpoints of the segments can change from station to station. We compute the change of the endpoints only when necessary (e.g., when an extension of a segment causes another to become suboptimal).

The segments for any set of consecutive states at any station $i$ have a specific structure which allows us to primarily use an augmented binary search tree (BST) and the standard BST operations to efficiently store and compute the optimal states for station $i+1$.

\begin{table}[!htbp]
\centering
\footnotesize
\setlength{\tabcolsep}{5pt}
\renewcommand{\arraystretch}{1.18}
\begin{tabular}{|>{\raggedright\arraybackslash}p{2.8cm}
                |>{\raggedright\arraybackslash}p{4.2cm}
                |>{\raggedright\arraybackslash}p{4.2cm}
                |>{\raggedright\arraybackslash}p{2.8cm}|}
\hline
\textbf{Reference(s)} &
\textbf{Production cost function} &
\textbf{Holding and backlogging/subcontracting cost functions} &
\textbf{Time complexity of the algorithm} \\
\hline
Chung and Lin (1988) \cite{chung1988} &
Fixed plus linear, Non-increasing setup and unit production costs, Non-decreasing capacity &
Linear holding cost &
\( O(n^2) \) \\
\hline
Federgruen and Lee (1990) \cite{federgruen1990} &
Fixed plus linear, Non-increasing setup and unit production costs, One-breakpoint all-units discount &
Linear holding cost &
\( O(n^3) \) \\
\hline
Atamtürk and Hochbaum (2001) \cite{atamturk2001} &
Non-speculative fixed plus linear, Constant capacity &
Linear holding cost, Non-speculative fixed plus linear subcontracting cost &
\( O(n^3) \) \\
\hline
Atamtürk and Hochbaum (2001) \cite{atamturk2001} &
Concave, Constant capacity &
Linear holding cost, Concave subcontracting cost &
\( O(n^5) \) \\
\hline
Mirmohammadi and Eshghi (2012) \cite{mirmohammadi2012} &
Fixed plus linear, One-breakpoint all-units discount and resale &
Linear holding cost &
\( O(n^4) \) \\
\hline
Li et al. (2012) \cite{li2012} &
Fixed plus linear, all-units discount (\( m \) stationary breakpoints) and resale &
Linear holding cost &
\( O(n^{m+3}) \) \\
\hline
Koca et al. (2014) \cite{koca2014} &
Piecewise concave with \( m \) stationary breakpoints &
Concave holding cost, Concave backlogging cost &
\( O(n^{2m+3}) \) \\
\hline
Ou (2017) \cite{ou2017} &
Piecewise linear with \( m \) stationary breakpoints &
Concave holding cost &
\( O(n^{m+2} \log n) \) \\
\hline
Malekian et al. (2021) \cite{malekian2021} &
Fixed plus linear, Non-increasing setup and unit production costs, One-breakpoint all-units discount, Constant capacity &
Linear holding cost &
\( O(n^4) \) \\
\hline
Down et al. (2021) \cite{down2021} &
Linear cost, Non-increasing unit production costs, One-breakpoint all-units discount &
Linear holding cost &
\( O(n^2) \) \\
\hline
This paper (2025) &
Linear cost, Non-increasing unit production costs, One-breakpoint all-units discount, Variable inventory capacity &
Linear holding cost &
\( O(n \log n) \) \\
\hline
\end{tabular}
\caption{Summary of relevant studies.}
\label{tab:summary}
\end{table}

\section{Organization of the Paper}
Section~3 formalizes the single-item lot-sizing model with a one-breakpoint all-units discount, non-increasing prices, and linear holding costs.
Section~4 introduces the state/segment representation and other preliminaries used throughout.
Section~5 develops the key structural observations and lemmas and introduces the MV-threshold machinery, including the holding-cost shift.
Section~6 presents the main algorithm and its analysis: an overview (6.1), the auxiliary procedures (6.2), a correctness proof (6.3), an \(O(n\log n)\) time bound (6.4), and two practical extensions—handling queries when \(B(i)>2Q\) (6.5) and decision-tracking to recover optimal plans (6.6).
Section~7 concludes. The Appendix details the augmented balanced-BST implementation (lazy tags, split/join, and related utilities) and includes the baseline dynamic programming solution for reference.

\section{Problem Definition}\label{sec:3}
The problem is about the planning of how to order inventory over a sequence of time periods (stations) \( t = 1, 2, \dots, n \). At each period:
\begin{itemize}
  \item There is a demand \( d_t \) (we also use the notation \(d(t,t{+}1)\) to describe the same thing) that must be fulfilled.
  \item You can order items (fuel) \(x_t \geq 0\) at a cost that depends on how many you order:
  \begin{itemize}
    \item If \(x_t < Q\), the unit price is \(p_{1,t}\).
    \item If \(x_t \geq Q\), the discounted unit price \(p_{2,t} \leq p_{1,t}\) applies.
  \end{itemize}
  \item Leftover items can be stored in inventory \(I_t\) with a storage capacity \(B(t)\), incurring a linear holding cost \(h_t\) per unit of inventory held at the end of period \(t\).
  \item We start with no inventory: \(I_0 = 0 \).
\end{itemize}

We also use the equivalent gas-station notation \(d(i,i{+}1)\) for the demand/distance between consecutive stations \(i\) and \(i{+}1\). One may view station \(n{+}1\) as a terminal destination with no purchasing.

The objective is to minimize the total cost of fulfilling all demands, given the pricing and inventory constraints.

More formally we can define the problem as follows:
Minimize the total procurement cost over $n$ periods, subject to meeting demands $d_t$, respecting inventory capacities $B(t)$, and using a tiered pricing function $p_t(x_t)$:
\[
\begin{aligned}
\min_{x,I}\quad & \sum_{t=1}^n \big(p_{t}(x_t)+h_t I_{t}\big),\\
\text{s.t.}\quad
& I_t = I_{t-1} + x_t - d_t,
&& t=1,\dots,n,\\
& 0\le I_t \le B(t),
&& t=1,\dots,n,\\
& I_0 = 0,\quad 0\leq x_t \le B(t) - I_{t-1}\quad B(t)\geq d_t
&& t=1,\dots,n.\\
\end{aligned}
\]
where
\[
p_{t}(x)=
\begin{cases}
p_{1,t}x, & x<Q,\\[6pt]
p_{2,t}x, & x\ge Q,
\end{cases}
\qquad t=1,2,\dots,n,
\]
and
\[
p_{1,t} \ge p_{1,t+1}
\quad \text{and} \quad
p_{2,t} \ge p_{2,t+1}.
\]

\section{Preliminary Definitions}\label{sec:4}

Central to our result is the following function:

\begin{definition}\label{def:dp}
For each time period $i$ and item (fuel) amount $f$, we define:

$\DP(i,f)$ = the minimum cost for reaching time period $i$ and having $f$ items in inventory when ready to leave time period $i$.
\end{definition}

\begin{definition}\label{def:state-tuple}
A state tuple for a specific station $i$ is a tuple containing two values, a fixed inventory level $f$ and its corresponding minimum cost $\DP(i,f)$.
\end{definition}

\begin{definition}\label{def:solution-set}
A solution set $S(i)$ for a station $i$ is a set of state tuples ordered in increasing order of their remaining fuel.
For simplicity we will refer to these state tuples in a solution set as just \emph{states} and represent them as $\DP(i,x)$ (instead of $(x,\DP(i,x))$) for a $x$ in the subsequent sections. We say that a state $\DP(i,x)$ for some $x$ is in $S(i)$ when its state tuple is contained in $S(i)$.

We distinguish between the solution sets at $i$ with states that exist before any use of fuel from $i$ which we denote as $S_b(i)$ and the $S(i)$ that contain states that are possibly using fuel from $i$ but before travelling from $i$ to $i+1$. After removing $d(i,i{+}1)$ from the remaining fuel of each state, removing any state with negative remaining fuel (that can't reach the next station) and possibly adding the holding costs, the latter is equivalent to $S_b(i{+}1)$.

A $k$-solution set for $i$ is a solution set containing an optimal state tuple (e.g., states) for each remaining fuel value between $0$ and $k$.

An approximate solution set is an optimal solution set where some states are implicit and are generated by other states (as is the case below; see \cref{def:segments,def:state-equations}). Instead of $\DP(i,f)$ being stored for all $f$ explicitly, only some entries are stored along with enough information to compute the rest. This is done in such a way that the entries that are not stored can be inferred from those that have been stored: If $\DP(i,f)$ and $\DP(i,g)$ are entries that are stored consecutively, then $\DP(i,h)$ for $f<h<g$ can be calculated from $\DP(i,f)$ using a linear equation. An approximate solution set $S(i)$ for some $i$ may be missing the first $d(i,i+1)-1$ states.

A $k$-approximate solution set is an approximate solution set that contains the states from $0$ to $k$ some of which are implicit. Note that in this case the $S(i)$ associated with a $k$-approximate solution set contains states with up to $k+d(i,i{+}1)$ remaining fuel.
\end{definition}

\begin{definition}\label{def:generate-carry}
We say a state from $S_b(i)$ \emph{generates} a state in $S(i)$ when the former is used to create the latter (e.g., using one of the equations of \cref{def:state-equations}).
A state is \emph{carried over} from $S_b(i)$ to $S(i)$ when it is the same within the two solution sets (e.g., it doesn't change from $S_b(i)$ to $S(i)$).
We also say a state $\DP(i,f)$ \emph{covers} a point $p$ (or a set of points) when it is used to generate the optimal state that has remaining fuel $p$ (e.g., the state $\DP(i,p)$).
\end{definition}

\begin{definition}[State-segments]\label{def:segments}
A state-segment (also referred to simply as a segment) in an implicit solution set is a tuple of a single state and the length of the continuous region covered by a specific state. When we say a segment covers a set of points or a region we mean its initial explicit state can be used to infer/generate the optimal states in that region.
\end{definition}

\begin{definition}\label{def:reachable}
A station $i$ (located at $d(i{-}1,i)$ from its previous station) is reachable by a segment (or a state) in $S_b(i{-}1)$ or $S(i{-}1)$ when that segment covers a region that includes a point $y$ with $y\geq d(i{-}1,i)$.
\end{definition}

\begin{definition}\label{def:segment-info}
With each state-segment for an entry $\DP(i,f)$, we maintain the following associated state information:
\begin{itemize}
  \item $p(i,f)$: the price $p_{2,i'}$ at which additional fuel could be bought at station $i'$ (where $i'$ is index of the station where the segment last received $p_2$). We also use this function for the price of the $p_2$ fuel that a state has received most recently.
  \item $r(i,f)$: the maximum amount of additional fuel that could be bought at $i'$ (limited by station capacities).
  \item Besides the above fuel which is available for each segment individually there is also the fuel that is available at each station $i$ for all segments. The \textbf{current station fuel} are the fuel types that can be used in combination with the state, namely $p_{2,i}$ and $p_{1,i}$ from the station $i$. The \textbf{optimal fuel} for a segment is the cheapest (per unit) of all the available fuels to it.
\end{itemize}
\end{definition}

\begin{definition}[State Equations]\label{def:state-equations}
The equations that generate a solution at point $f+x$ (e.g., generate $\DP(i,f+x)$) in a segment where $\DP(i,f)$ is the initial state of the segment are:
\begin{align*}
\mathrm{Eq}(x) &= \DP(i,f) + x \cdot p_{1,i}
&& \text{if } p_{1,i} \text{ is the optimal fuel}, \\
\mathrm{Eq}(x) &= \DP(i,f) + x \cdot p(i,f)
&& \text{if } p_2 \text{ fuel is optimal}, \\
\mathrm{Eq}(x) &= \DP(i,f) + r(i,f)\cdot p(i,f) + (x-r(i,f))\cdot p_{1,i}
&& \text{if } x > r(i,f).
\end{align*}
The general equation for the optimal fuel for any state $\DP(i,w)$ is denoted as $\mathrm{Eq}_{w}(x)$.
\end{definition}

We note that the price of the optimal fuel used in the equation that creates implicit states can change from station to station without the explicit state of the segment necessarily having been changed.

\begin{definition}\label{def:dominance}
We also say that:
\begin{itemize}
  \item A segment/state $S_1$ is \textbf{better than} $S_2$ at a point $p$ if the solution it gives has a lower value at that point.
  \item A segment/state $S_1$ \textbf{dominates} another segment $S_2$ when it is better for the totality of the region previously covered by $S_2$.
  \item A segment/state $S_1$ \textbf{weakly dominates} another segment $S_2$ when it is no worse on the totality of the region previously covered by $S_2$
  (i.e., it has value $\le$ at every point of that region).
  \item When domination occurs, the removal of $S_2$ and the readjustment of $S_1$ is called $S_2$'s \textbf{replacement}.
\end{itemize}
\end{definition}

\begin{definition}\label{def:terminus}
The terminus of a segment is the right endpoint of the segment when the state is first created or when it is readjusted (e.g., it replaces another state), calculated by finding the last point where the state is optimal and can use $p(i,f)$.
If a segment whose initial state is $\DP(i,f)$ covers a region larger than $r(i,f)$, then its terminus is the point $f+r(i,f)$.
\end{definition}

Note that the right endpoint of a state can change from station to station due to the availability of cheaper $p_1$ fuel, while the terminus must remain invariant unless the state is removed or its adjacent state is removed.

\begin{definition}\label{def:bulk-individual}
We say that a set of segments or states in $S_b(i)$ is \emph{bulk updated} at a station $i$ when their explicit states are given $Q$ units of $p_{2,i}$ and their implicit states are given more than $Q$ units of $p_{2,i}$ implicitly via the segment equation.

Any implicit state if made explicit before the next bulk-update will receive its $p_{2,i}$ in the sense that a separate tuple will be created that will store information about the value and the remaining fuel of that specific state. This is done only in special circumstances, such as when a segment is split and one of the new segments requires an initial state.

When we say that a state is \emph{individually updated} we mean that it is given enough $p_{1,i}$ or $p_{2,i}$ to extend to its adjacent to the right segment because it dominates it.

When we say that a set of states will be individually updated we mean that all the dominated states have been removed by individual updating.
\end{definition}

\section{Key Properties and Lemmas}\label{sec:5}

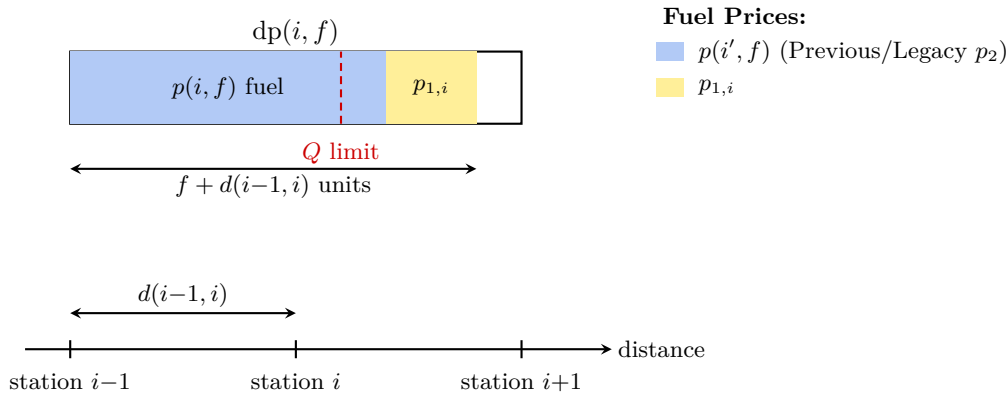
\begin{figure}[ht]
\centering
\begin{tikzpicture}[
  scale=1.2,
  font=\footnotesize,
  >=stealth,
  every node/.style={inner sep=2pt}
]
  \definecolor{bulkfuel}{RGB}{100,149,237}   
  \definecolor{regularfuel}{RGB}{255,215,0}  

  \draw[thick] (0,0) rectangle (5,0.8);

  \fill[bulkfuel!50] (0,0) rectangle (3.5,0.8);

  \fill[regularfuel!40] (3.5,0) rectangle (4.5,0.8);

  \draw[thick, red!80!black, densely dashed, line width=0.8pt]
    (3,0) -- (3,0.8);

  \node[above, font=\small\bfseries] at (2.5,0.8) {$\DP(i,f)$};

  \node at (1.75,0.4) {$p(i,f)$ fuel};
  \node at (4,0.4) {$p_{1,i-1}$};

  \node[red!80!black, below] at (3,-0.15) {$Q$ limit};

  \draw[<->, thick] (0,-0.5) -- (4.5,-0.5);
  \node[below] at (2.25,-0.5) {$f + d(i{-}1,i)$ units};

  \begin{scope}[yshift=-2.5cm]
    \draw[thick, ->] (-0.5,0) -- (6,0) node[right] {distance};

    \foreach \x/\label in {0/{$i{-}1$}, 2.5/{$i$}, 5/{$i{+}1$}} {
      \draw[thick] (\x,-0.1) -- (\x,0.1);
      \node[below] at (\x,-0.2) {station \label};
    }

    \draw[<->, thick] (0,0.4) -- (2.5,0.4);
    \node[above] at (1.25,0.4) {$d(i{-}1,i)$};
  \end{scope}

  \begin{scope}[xshift=6.5cm, yshift=0.4cm]
    \node[font=\footnotesize\bfseries, anchor=west] at (0,0.8) {Fuel Prices:};

    \fill[bulkfuel!50] (0,0.3) rectangle (0.3,0.5);
    \node[anchor=west] at (0.4,0.4) {$p(i',f)$ (Previous/Legacy $p_2$)};

    \fill[regularfuel!40] (0,-0.1) rectangle (0.3,0.1);
    \node[anchor=west] at (0.4,0) {$p_{1,i-1}$};
  \end{scope}
\end{tikzpicture}
\caption{Pictorial representation of a suboptimally created state $\DP(i,f)$ in $S_b(i)$ that was in $S(i-1)$ as $\DP(i,f+d(i-1,i))$. We note that the minimum amount of $p_2$ fuel that must be used is denoted by the $Q$-limit. Any ``excess'' additional amount may be added voluntarily but can lead to a suboptimal solution.}
\label{fig:dp-state-representation}
\end{figure}

The construction of an optimal solution set $S(i)$ can be done through the use of various algorithms.
For the sake of proving the properties and lemmas, we will assume that the construction of $S_b(i)$ for any $i$ itself can be made given $S_b(i{-}1)$ using the naive dynamic programming method presented in the appendix as Algorithm~\ref{alg:naive}.
In the following proofs the capacity $B(i)$ will be somewhat absent for the sake of simplicity but the lemmas are still correct when capacity is taken into account.

\begin{observation}\label{obs:no-refuel-if-reaches-next}
It is never necessary for any state $\DP(i,x)\in S_b(i)$ that already reaches station $i{+}1$
(i.e., $x\ge d(i,i{+}1)$) to generate a state in $S(i)$ (and hence in $S_b(i{+}1)$) by receiving
fuel from station $i$. Any state created in this way is weakly dominated by a state that postpones
the same purchase to station $i{+}1$.
\begin{proof}
Fix $\DP(i,x)\in S_b(i)$ with $x\ge d:=d(i,i{+}1)$, and suppose we buy $y>0$ units at station $i$,
thereby creating a candidate state in $S(i)$ with remaining fuel $x+y$ before traveling to $i{+}1$.
Since $x\ge d$, we can travel from $i$ to $i{+}1$ using only the original $x$ units and arrive with
$(x+y)-d=(x-d)+y$ units.

Consider instead the policy that makes \emph{no} purchase at station $i$, travels to station $i{+}1$
(using $d$ units from the original $x$), and then purchases \emph{exactly the same amount} $y$ at
station $i{+}1$. This alternative yields the same remaining fuel $(x-d)+y$ at station $i{+}1$.

The procurement cost of buying $y$ units at station $i$ is
$p_{1,i}y$ if $y<Q$ and $p_{2,i}y$ if $y\ge Q$; buying the same $y$ at station $i{+}1$ costs
$p_{1,i+1}y$ if $y<Q$ and $p_{2,i+1}y$ if $y\ge Q$. By the non-increasing price assumption,
$p_{1,i+1}\le p_{1,i}$ and $p_{2,i+1}\le p_{2,i}$, so the postponed-purchase policy is no more
expensive (and strictly cheaper if any inequality is strict). Feasibility is preserved because the
inventory level at station $i{+}1$ is identical in both policies (and holding costs, if present, only
strengthen the dominance when purchases are postponed). Hence purchasing at station $i$ from a state
that already reaches $i{+}1$ cannot be part of an optimal solution.
\end{proof}
\end{observation}
Note that this holds \emph{a fortiori} with linear holding costs. Observation~\ref{obs:no-refuel-if-reaches-next} is crucial to the analysis of cases in the next lemma since it implies that states with more remaining fuel than $d(i,i{+}1)$ are excluded from the creation of new optimal states with fuel from $i$.
\begin{lemma}\label{lem:2Q-suffices}
At any given station $i$, the optimal solution set $S_b(i)$ with the first $2Q$ states contains all the states necessary to construct the optimal $2Q$ solution set $S_b(i{+}1)$. Here we assume that $B(i)\geq 2Q+d(i,i+1)$, if it's not, then the first \( K_i := \min\{2Q,\; B(i) - d(i,i+1)\} \) states suffice.
\end{lemma}
\begin{proof}
We prove that no state $\DP(i,w)\in S_b(i)$ with $w>2Q$ is ever \emph{necessary} as a predecessor
to generate an optimal state in $S(i)$ (and hence in $S_b(i{+}1)$ after subtracting $d(i,i{+}1)$).
Equivalently, whenever an optimal state at station $i$ has fuel $f\ge 2Q$, there is an optimal
way to obtain it from some predecessor in $S_b(i)$ whose fuel is $\le 2Q$.

Suppose for a contradiction that there exists a fuel level $f=2Q+x$ and an optimal state
$\DP(i,f)\in S(i)$ such that every optimal construction of $\DP(i,f)$ uses as predecessor
a state $\DP(i,w)\in S_b(i)$ with $w>2Q$. Fix such an optimal predecessor $\DP(i,w)$.

Let $i'<i$ be the last station at which this predecessor received discounted fuel, i.e.,
its most recent $p_2$ price is $p(i,w)=p_{2,i'}$. By a standard exchange argument under the
non-increasing price assumption, we may assume (w.l.o.g.) that any non-discounted fuel units
used by the plan after station $i'$ (if any) are purchased as late as possible (so at price
$p_{1,i-1}$), and any discounted units are from the latest discounted purchase (so at price
$p_{2,i'}$); any other mixture can only be weakly improved by postponing purchases.

Let $c$ be the fuel of the state in $S_b(i')$ that is used to generate this last discounted
purchase at station $i'$. Denote by
\[
D := d(i',i)
\]
the total distance (cumulative demand) from station $i'$ to station $i$.
Since a purchase is made at $i'$, by Observation~\ref{obs:no-refuel-if-reaches-next} (applied
at station $i'$) we must have $c<D$ (otherwise the state would already reach station $i'{+}1$
and purchasing at $i'$ would be unnecessary). To arrive at station $i$ with $w$ fuel, the plan
must buy at station $i'$ an amount
\[
y \;=\; (D-c) + w
\]
so that after traveling distance $D$ we have $(c+y)-D=w$. Because this is a discounted purchase,
we have $y\ge Q$ and all $y$ units are priced at $p_{2,i'}$.

We now distinguish two cases, exactly as in the original argument.

\smallskip
\noindent\textbf{Case 1: More than $Q$ units are unavoidable.}
Assume $D\ge Q+c$, i.e., $D-c\ge Q$. Then even buying \emph{only} the minimum amount needed to
reach station $i$ from fuel $c$ requires at least $Q$ units at $i'$, so the discount at $i'$
is unavoidable. In particular, we may reduce the purchase at $i'$ by exactly $w$ units and still
remain in the discounted regime, because the reduced purchase is
\[
y-w = (D-c)\ \ge\ Q.
\]
Doing so yields a feasible plan that reaches station $i$ with \emph{zero} fuel, and its cost is
exactly $\DP(i,w)-w\,p_{2,i'}$ (same plan, minus $w$ discounted units at $i'$). Therefore,
by optimality of $\DP(i,0)$,
\begin{equation}\label{eq:dp0-bound}
\DP(i,0)\ \le\ \DP(i,w)-w\,p_{2,i'}.
\end{equation}
Now generate fuel $f=2Q+x$ at station $i$ from $\DP(i,0)$ by purchasing $f$ units at station $i$.
Since $f\ge 2Q\ge Q$, this purchase uses the discounted price $p_{2,i}$, so the resulting cost is
\[
\DP(i,0) + f\,p_{2,i}
\;\le\;
\DP(i,w)-w\,p_{2,i'} + f\,p_{2,i}
\quad\text{(by \eqref{eq:dp0-bound}).}
\]
Using $p_{2,i'}\ge p_{2,i}$ (non-increasing discounted prices), we get
\[
\DP(i,w)-w\,p_{2,i'} + f\,p_{2,i}
\;\le\;
\DP(i,w) + (f-w)\,p_{2,i}
\;\le\;
\DP(i,w) + (f-w)\,p_{j,i},
\]
where $j\in\{1,2\}$ is the tier used by the original construction from $\DP(i,w)$ to reach $f$
($j=1$ if $f-w<Q$, else $j=2$), and the last inequality holds because $p_{1,i}\ge p_{2,i}$.
Thus $\DP(i,f)$ can be obtained with no higher cost from the predecessor $\DP(i,0)$, contradicting
the assumed necessity of a predecessor with $w>2Q$.

\smallskip
\noindent\textbf{Case 2: More than $Q$ units are unnecessary.}
Assume $Q+c>D$, i.e., $D-c<Q$. Then reaching station $i$ from fuel $c$ does \emph{not} require a
$Q$-sized purchase at $i'$, so obtaining the discount at $i'$ forces us to buy \emph{extra} units.
Let
\[
s \;:=\; Q-(D-c)\ \in (0,Q]
\]
be the amount of fuel that would remain at station $i$ if we bought \emph{exactly} $Q$ units at $i'$.
Since we actually arrive with $w$ fuel, define the ``excess'' above the $Q$-threshold by
\[
h \;:=\; w-s \;>\; 0,
\qquad\text{so that}\qquad w=s+h,
\]
and note that reducing the purchase at $i'$ by $h$ units leaves a purchase of exactly $Q$ units,
which is still discounted. Hence we can reduce the purchase at $i'$ by $h$ discounted units and
obtain a feasible plan reaching station $i$ with fuel $s$, at cost $\DP(i,w)-h\,p_{2,i'}$.
Therefore, by optimality,
\begin{equation}\label{eq:dps-bound}
\DP(i,s)\ \le\ \DP(i,w)-h\,p_{2,i'}.
\end{equation}
Since $s\le Q<2Q$, the state $\DP(i,s)$ is contained in the first $2Q$ states of $S_b(i)$.

Now generate fuel $f=2Q+x$ at station $i$ from $\DP(i,s)$ by purchasing $(f-s)$ units at station $i$.
Because $f\ge 2Q$ and $s\le Q$, we have $f-s\ge Q$, so this purchase uses the discounted price $p_{2,i}$.
Thus the resulting cost is
\[
\DP(i,s) + (f-s)\,p_{2,i}
\;\le\;
\DP(i,w)-h\,p_{2,i'} + (f-s)\,p_{2,i}
\quad\text{(by \eqref{eq:dps-bound}).}
\]
Using $w=s+h$, the right-hand side becomes
\[
\DP(i,w) + (f-w)\,p_{2,i} + h\,(p_{2,i}-p_{2,i'})
\;\le\;
\DP(i,w) + (f-w)\,p_{2,i}
\;\le\;
\DP(i,w) + (f-w)\,p_{j,i},
\]
where we used $p_{2,i}\le p_{2,i'}$ and again $p_{1,i}\ge p_{2,i}$.
This contradicts the assumed necessity of a predecessor with $w>2Q$.

\smallskip
In both cases we derived a contradiction. Hence no state in $S_b(i)$ with fuel $>2Q$ is necessary
as a predecessor when forming optimal states in $S(i)$. Since $S_b(i{+}1)$ is obtained from $S(i)$
by subtracting $d(i,i{+}1)$ from the fuel coordinate, the same conclusion holds for constructing
$S_b(i{+}1)$ (in particular, for the part of $S_b(i{+}1)$ used in the next iteration).
\end{proof}

\begin{lemma}\label{lem:monotone-legacy-p2}
For any period $i$ and any two states $\DP(i,f)$ and $\DP(i,w)$ in $S_b(i)$ with $f < w$, we have $p(i,f) \geq p(i,w)$ (if the solution sets are explicit then $p(i,\cdot)$ represents the price of the $p_2$ fuel the corresponding states last received). In other words: states with less fuel have more expensive (or equal) previous $p_2$ prices.
\end{lemma}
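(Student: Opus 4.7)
My plan is to derive a contradiction via an exchange argument on the underlying production plans.

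Assume $f<w$ both lie in $S_b(i)$ yet $p(i,f)<p(i,w)$. Let $\pi_f,\pi_w$ be optimal plans witnessing $\text{dp}(i,f)$ and $\text{dp}(i,w)$, and let $i_f,i_w$ be the most recent stations at which each performed a bulk $p_2$ purchase, so that $p(i,f)=p_{2,i_f}$ and $p(i,w)=p_{2,i_w}$. Since $p_2$ is non-increasing in the station index, $p_{2,i_f}<p_{2,i_w}$ forces $i_f>i_w$. Note also that since $i_w$ is the latest bulk station of $\pi_w$, any purchase made at $i_f$ in $\pi_w$ is a $p_1$ purchase of some amount $\alpha<Q$.

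The engine of the argument is a forward shift: move $\delta$ units of $\pi_w$'s bulk from $i_w$ to $i_f$, where the tariff is strictly cheaper. If the original bulk amount $Q_w$ at $i_w$ satisfies $Q_w\ge 2Q$, I take $\delta=Q$; then bulk is preserved at $i_w$ (since $Q_w-Q\ge Q$) and triggered at $i_f$ (since $\alpha+Q\ge Q$, so the all-units discount applies to the whole $\alpha+Q$), yielding a net cost change of
\[Q(p_{2,i_f}-p_{2,i_w})+\alpha(p_{2,i_f}-p_{1,i_f}),\]
which is strictly negative (both terms are $\le 0$ and the first is $<0$), contradicting the optimality of $\pi_w$. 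In the low-bulk range $Q\le Q_w<2Q$, shifting the entire $Q_w$ to $i_f$ loses bulk at $i_w$ but yields a cost change $Q_w(p_{2,i_f}-p_{2,i_w})+\alpha(p_{2,i_f}-p_{1,i_f})<0$ by the same sign analysis, again contradicting optimality.

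The main obstacle is feasibility of the shift: removing $\delta$ units from $i_w$ reduces every inventory level of $\pi_w$ on $[i_w,i_f-1]$ by $\delta$, so I need $\min_{j\in[i_w,i_f-1]}I_j^{\pi_w}\ge\delta$. When this slack is positive I set $\delta$ to the minimum of the desired shift and the slack, and the cost analysis above gives the strict contradiction. When the slack is zero at some $j^\ast\in[i_w,i_f-1]$ --- meaning $\pi_w$ exhausts its tank at $j^\ast$ --- fuel reaching $i$ must come entirely from purchases on $(j^\ast,i-1]$, and since $i_w$ is the last bulk of $\pi_w$ these are all $p_1$ purchases. I would then perform a different exchange, trading a portion of these higher-index $p_1$ purchases for a $p_2$ bulk at $i_f$; I expect this sub-case to be the most delicate, and I would handle it by leveraging the monotone price structure together with Lemma~1 (the $2Q$ bound on relevant states) and Observation~1 to bound how much fuel must be moved and from which stations. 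Any residual ties in cost are broken by the canonical convention of choosing the minimum legacy price among equally optimal plans, yielding a contradiction with $p(i,f)<p(i,w)$ in every case.
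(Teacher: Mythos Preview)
Your exchange argument has a genuine gap at exactly the point you flag as ``delicate,'' and also earlier. When the slack $m=\min_{j\in[i_w,i_f-1]}I_j^{\pi_w}$ is positive but smaller than the desired shift, taking $\delta=m$ does \emph{not} make ``the cost analysis above'' go through: if $\alpha+\delta<Q$ then no bulk is triggered at $i_f$, the purchase there remains at price $p_{1,i_f}$, and the net change is $\delta(p_{1,i_f}-p_{2,i_w})$, whose sign is uncontrolled (and the modified plan still has its last bulk at $i_w$, so tie-breaking does not help either). The zero-slack subcase is simply not proved: consolidating later $p_1$ purchases into a bulk at $i_f$ runs into the same feasibility issues, and for stations $s>i_f$ one can have $p_{1,s}<p_{2,i_f}$, so moving that fuel backward to $i_f$ may strictly increase cost. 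You also never use the plan $\pi_f$ beyond the location $i_f$, which throws away the main structural information available; invoking Lemma~1 and Observation~1 does not by itself manufacture the $\ge Q$ units you need at $i_f$ in $\pi_w$.

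The paper avoids all of this by an inductive argument on $i$ rather than a global exchange. Assuming the label ordering holds in $S_b(i)$, it classifies how each state of $S(i)$ arises: carry-over states keep their labels; the states created by ordering $\ge Q$ at period $i$ all carry the single label $p_{2,i}$ and, because the lines $\phi_u(x)=\text{dp}(i,u)+(x-u)p_{2,i}$ have equal slope, their lower envelope occupies a \emph{suffix} in inventory; and each state created with $p_{1,i}$ from a base $\text{dp}(i,g)$ lies in a contiguous block inheriting the label $p(i,g)$. Concatenating these pieces preserves the nonincreasing order of legacy-$p_2$ labels, and passing to $S_b(i+1)$ is just a shift. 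That structural route is what you are missing.
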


\begin{proof}
We adopt the tie-breaking convention that, when multiple optimal states exist for the same fuel, we keep the one whose last $p_2$ purchase is as late as possible, i.e., has the smallest unit price. This preserves optimality and only affects labels $p(i,\cdot)$.

\textbf{Base case ($i=1$):} All states in $S_b(1)$ have $p(1,f) = +\infty$ (no previous $p_2$), so the property holds trivially.

\textbf{Inductive hypothesis:} Assume the property holds for $S_b(i)$: for any $f < w$, we have $p(i,f) \geq p(i,w)$.

\textbf{Inductive step:} We show the property holds for $S_b(i{+}1)$ by analyzing how states in $S(i)$ are created from $S_b(i)$.

By Lemma~\ref{lem:2Q-suffices} we may restrict attention to the first $2Q$ states of $S_b(i)$.

\smallskip
\emph{1) States carried over (no ordering at period $i$).}
Their labels $p(\cdot)$ are unchanged, so the ordering inherited from $S_b(i)$ is preserved.

\smallskip
\emph{2) States created with $p_{2,i}$ (ordering $\ge Q$ units at $i$).}
Every such state has last discounted price exactly $p_{2,i}$.
Consider, for each base state $\DP(i,u)\in S_b(i)$, the linear function
\[
\phi_u(x) \;=\; \DP(i,u) + (x-u)\,p_{2,i},\qquad x\ge u+Q,
\]
which is the cost of reaching inventory $x$ by buying at least $Q$ at period $i$ at unit price $p_{2,i}$.
All $\phi_u$ have the same slope $p_{2,i}$, so their lower envelope for large $x$ is attained by the minimum intercept
$\min_u\{\DP(i,u)-u\,p_{2,i}\}$, and thus, beyond some inventory threshold, the optimal states in $S(i)$ are exactly those created with $p_{2,i}$ (with label $p_{2,i}$). Hence the set of $p_{2,i}$-created states forms a \emph{suffix} in the inventory order of $S(i)$. Since prices are non-increasing over time, $p_{2,i}\le p(i,\cdot)$ for any carried state, so placing a constant (weakly smaller) label on a suffix preserves the nonincreasing order of labels.

\smallskip
\emph{3) States created with $p_{1,i}$ (ordering $<Q$ units at $i$).}
Fix a base state $\DP(i,g)\in S_b(i)$ and consider the affine function
\[
\psi_g(x) \;=\; \DP(i,g) + (x-g)\,p_{1,i}, \qquad g\le x<g+Q,
\]
which gives the cost of reaching $x$ using only $p_{1,i}$. All $\psi_g$ have the same slope $p_{1,i}$.
Therefore, if $\psi_g$ is optimal at some $e$ with $g\le e<g+Q$, then for any $e<e'<g+Q$ and any $x<g$,
\[
\DP(i,x)+(e'-x)p_{1,i} - \big(\DP(i,g)+(e'-g)p_{1,i}\big)
=
\Big(\DP(i,x)+(e-x)p_{1,i}\Big) - \Big(\DP(i,g)+(e-g)p_{1,i}\Big),
\]
so the sign of the difference is preserved as we move to the right within $[g,g+Q)$.
Hence the points generated from $\DP(i,g)$ using $p_{1,i}$ form a \emph{contiguous interval} to the right of $g$ (possibly empty) and they inherit the same label $p(i,g)$ as their base state. Because the base states’ labels are nonincreasing by the inductive hypothesis, concatenating these constant-label intervals preserves the nonincreasing order up to the beginning of the $p_{2,i}$ suffix.

\smallskip
Combining 1)–3), the labels in $S(i)$ are nonincreasing in inventory. Passing from $S(i)$ to $S_b(i{+}1)$ subtracts the constant demand $d(i,i{+}1)$ from inventories but does not change any labels, so the property holds for $S_b(i{+}1)$ as well.
\end{proof}

\begin{observation}\label{obs:prefix-no-skip}
First, observe that to generate an optimal state with fuel of station \( i \) for $S(i)$ we must also use an optimal state in $S_b(i)$. The latter state if its remaining fuel is equal or higher than $d(i,i{+}1)$ also generates an optimal state in $S(i)$ with no additional fuel (e.g., it exists in both $S_b(i)$ and $S(i)$). This means that no state before it in $S_b(i)$ can be used to generate a state after it in $S(i)$ (also due to the fuel of the earlier state being more or equally expensive; for this we may need to define the solution sets as always containing the leftmost possible state), this can be proven by an exchange argument.
\end{observation}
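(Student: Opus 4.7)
The observation comprises three claims, and I would dispatch them in a single chain.

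For the first claim, if $dp(i,v)\in S(i)$ were generated from a sub-optimal $dp_b(i,u)\in S_b(i)$, replacing $dp_b(i,u)$ with the true minimizer at inventory $u$ (keeping the same station-$i$ purchase) would strictly lower the candidate at $v$, contradicting the assumed optimality of $dp(i,v)$. For the second claim, if $dp_b(i,w)$ is optimal with $w\ge d(i,i+1)$, Observation~1 says no station-$i$ fuel ever needs to be added on top of it, so the inventory-$w$ slot of $S(i)$ is filled by $dp_b(i,w)$ unchanged — i.e.\ the state is carried over.

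The third claim — no state before $w$ in $S_b(i)$ generates an optimal state after $w$ in $S(i)$ — is the substantive part and calls for an exchange argument. Suppose some $dp_b(i,u)$ with $u<w$ produces the optimal state at inventory $v>w$ in $S(i)$ by adding $v-u$ units of station-$i$ fuel at cost $F(v-u)$, where $F$ is the station-$i$ all-units cost function (piecewise linear with the jump at $Q$). The carry-over inequality $dp_b(i,w)\le dp_b(i,u)+F(w-u)$, available from claim~2, lets me compare against the rival candidate $dp_b(i,w)+F(v-w)$. When $w-u$ and $v-w$ lie in the same fuel regime, $F$ is linear on $[0,v-u]$, so $F(v-u)=F(w-u)+F(v-w)$, and the two inequalities compose to contradict strict optimality from $u$. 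In the mixed-regime case I would invoke Lemma~2, which gives $p(i,u)\ge p(i,w)$: because the legacy label on $w$ is at least as cheap, any $p_{2,i}$-discount exploited by base $u$ is already accessible to a candidate rooted at a base of inventory at least $w$, via $w$'s cheaper legacy $p(i,w)$. Ties are broken by the \emph{leftmost possible state} convention flagged in the statement (keep the state whose last $p_2$ label is cheapest), which by Lemma~2's monotonicity is always a base at inventory $\ge w$.

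The hard part is precisely the mixed regime $v-u\ge Q>v-w$: the linear exchange degrades because $F$ has a downward jump at $Q$, so the plain carry-over inequality alone does not rule out a cheaper-looking $p_{2,i}$ route through $u$. Closing this case forces me to couple Lemma~2's monotone legacy-price labels with the non-increasing price hypothesis, and then to use the leftmost-state tie-breaking rule — without that convention, two bases can tie on raw cost and the argument fails at the boundary. Making this step rigorous is the only real work; the other two claims are essentially one-line replacements.
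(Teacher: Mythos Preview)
The paper gives no proof beyond the inline remark ``this can be proven by an exchange argument'', so there is nothing detailed to compare against; your three-claim decomposition matches the observation's structure, and claims~1--2 are the easy parts as you say.

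The exchange you propose for claim~3, however, does not close. To route through $w$ you need $F(w-u)+F(v-w)\le F(v-u)$ (so that the from-$w$ candidate beats the from-$u$ one), but the all-units discount makes the station-$i$ cost $F$ \emph{subadditive}, not superadditive. Concretely, when $w-u<Q$ and $v-w<Q$ yet $v-u\ge Q$, we have $F(w-u)+F(v-w)=p_{1,i}(v-u)>p_{2,i}(v-u)=F(v-u)$, so your bound $dp_b(i,w)+F(v-w)\le dp_b(i,u)+F(w-u)+F(v-w)$ only gives a number \emph{larger} than $dp_b(i,u)+F(v-u)$. Your ``same regime $\Rightarrow$ $F$ linear on $[0,v-u]$'' assertion is false precisely here, and this case is not covered by your mixed-regime clause either (both pieces lie below $Q$). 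The Lemma~2 repair is also misaimed: $p(i,w)$ is a \emph{legacy} price $p_{2,i'}$ from some earlier $i'<i$, hence $p(i,w)\ge p_{2,i}$ by the non-increasing hypothesis; ``$w$'s cheaper legacy'' therefore cannot replicate the current discount $p_{2,i}$ that base $u$ exploits when it buys $v-u\ge Q$ units. Separately, your derivation of claim~2 from Observation~1 is a non-sequitur: Observation~1 only forbids adding station-$i$ fuel \emph{on top of} a reaching base; it does not rule out a non-reaching base $u<d(i,i+1)$ plus station-$i$ fuel beating $dp_b(i,w)$ at inventory $w$. Without $dp(i,w)=dp_b(i,w)$ your carry-over inequality $dp_b(i,w)\le dp_b(i,u)+F(w-u)$ is unavailable, and the whole chain for claim~3 loses its starting point.
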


\begin{lemma}\label{lem:contiguity}
If a state $\DP(i,f) \in S_b(i)$ is used to generate optimal states in $S(i)$ using fuel of a specific price from period $i$, then these generated states form a continuous interval $[f+a, f+b]$ for some $b \geq a$ and a specific type of used fuel ($p_{1,i}$ or $ p_{2,i}$).
Furthermore, this remains true when $S_b(i)$ is stored approximately: every implicit state
$dp(i,f)$ can be viewed as a predecessor with value obtained from its segment equation, and the same contiguity conclusion follows.
\end{lemma}

\begin{proof}
Fix a period $i$. For a fuel level $u$ in $S_b(i)$, let $C(u)$ denote the optimal cost value of the corresponding state (i.e., $C(u)=\DP(i,u)$ when the state is viewed in $S_b(i)$).
When generating a state with final fuel $y$ at period $i$, a linear holding cost term $h_i\,y$ (if present) is the same for all predecessors for a fixed $y$ and therefore does not affect which predecessor is optimal; similarly, an inventory-capacity upper bound simply truncates the range of feasible $y$. Thus we omit these terms in the comparison and focus only on procurement.

We prove the claim separately for the two ways of buying at period $i$.

\smallskip
\noindent\textbf{Case 1: using $p_{1,i}$ (buying $x<Q$ units at $i$).}
From a predecessor state with fuel $u$, buying $x=y-u$ units with $0\le x < Q$ produces fuel $y$ with
\[
y\in\{u,u+1,\dots,u+Q-1\}
\qquad
(\text{or }y\in[u,u+Q)\text{ in the continuous variant}),
\]
at candidate cost
\[
C(u) + (y-u)p_{1,i}
= \big(C(u)-u\,p_{1,i}\big) + y\,p_{1,i}.
\]
Define the \emph{key}
\[
A(u) \;:=\; C(u)-u\,p_{1,i}.
\]
For a fixed target fuel $y$, the feasible predecessors $u$ for a $p_{1,i}$-purchase are exactly those with
$0\le y-u < Q$, i.e.
\[
W_y := \{u:\; y-Q+1 \le u \le y\}.
\]
Among these, the optimal $p_{1,i}$-candidate cost equals
$y\,p_{1,i} + \min_{u\in W_y} A(u)$.
To make the generating predecessor unique when there are ties, we adopt the tie-breaking rule:
\emph{among all minimizers of $A(\cdot)$ over $W_y$, choose the largest $u$}.
Let this chosen predecessor be denoted $u^*(y)$.

\smallskip
\emph{Claim: $u^*(y)$ is nondecreasing in $y$.}
Indeed, when $y$ increases by $1$, the window shifts one step to the right:
\[
W_{y+1} = \{y-Q+2,\dots,y+1\} = (W_y\setminus\{y-Q+1\})\cup\{y+1\}.
\]
If $u^*(y)\neq y-Q+1$, then $u^*(y)\in W_{y+1}$ remains feasible.
If (for contradiction) $u^*(y+1) < u^*(y)$, then $u^*(y+1)\in W_{y+1}\subseteq W_y$ and either
$A(u^*(y+1))<A(u^*(y))$ or $A(u^*(y+1))=A(u^*(y))$ with a larger index preferred at time $y$,
contradicting that $u^*(y)$ is the \emph{rightmost} minimizer of $A(\cdot)$ over $W_y$.
If $u^*(y)=y-Q+1$, then it drops out of the next window, so necessarily
$u^*(y+1)\ge y-Q+2>u^*(y)$.
Thus $u^*(y+1)\ge u^*(y)$ always.

Since $u^*(y)$ is nondecreasing in $y$, the set of target fuel values $y$ for which $u^*(y)=f$
(i.e., the optimal state at fuel $y$ is generated from $\DP(i,f)$ using $p_{1,i}$)
is a (possibly empty) set of consecutive values, hence a continuous interval.
Moreover, $f\in W_y$ is equivalent to $y\in\{f,f+1,\dots,f+Q-1\}$, so this interval has the form
$[f+a,f+b]$ with $0\le a\le b\le Q-1$ (and possibly truncated by capacity).

\smallskip
\noindent\textbf{Case 2: using $p_{2,i}$ (buying $x\ge Q$ units at $i$).}
From predecessor fuel $u$, buying $x=y-u\ge Q$ produces any target $y\ge u+Q$ at candidate cost
\[
C(u) + (y-u)p_{2,i}
= \big(C(u)-u\,p_{2,i}\big) + y\,p_{2,i}.
\]
Define
\[
B(u) \;:=\; C(u)-u\,p_{2,i}.
\]
For fixed target $y$, the feasible predecessors for a $p_{2,i}$-purchase are exactly those with $u\le y-Q$,
i.e. the prefix
\[
P_y := \{u:\; u\le y-Q\}.
\]
Thus the optimal $p_{2,i}$-candidate cost equals $y\,p_{2,i} + \min_{u\in P_y} B(u)$.
Again tie-break by choosing the largest minimizer and denote it $v^*(y)$.

\smallskip
\emph{Claim: $v^*(y)$ is nondecreasing in $y$.}
As $y$ increases, the prefix $P_y$ only expands by adding new (larger) feasible predecessors.
If $v^*(y+1)<v^*(y)$, then $v^*(y+1)\in P_y$ already, contradicting that $v^*(y)$ is the rightmost minimizer
over $P_y$. Hence $v^*(y+1)\ge v^*(y)$.

Therefore, for a fixed $f$, the set of $y$ for which the optimal state at fuel $y$ is generated
from $\DP(i,f)$ using $p_{2,i}$ is again a (possibly empty) continuous interval of fuel values.
Feasibility implies $y\ge f+Q$, so it can be written as $[f+a,f+b]$ for some $b\ge a$ (with $a\ge Q$).

This proves the first part of the lemma.

\smallskip
\noindent\textbf{Approximate solution sets (lifting to an explicit frontier).}
Assume now that $S_b(i)$ is stored as an \emph{approximate} solution set via segments.
Although only anchor states are explicitly stored, every feasible fuel level $u$ still has a well-defined
optimal value: define
\[
\widetilde C(u) := \DP(i,u),
\]
where $\widetilde C(u)$ is computed by evaluating the segment equation covering $u$
(from \cref{def:segments,def:state-equations}). Thus the approximate representation is
merely a compact encoding of the fully explicit list of values
$\{\widetilde C(u): u \text{ feasible}\}$.

Now fix a \emph{single} purchase regime at station $i$ with a fixed unit price $c$
(e.g., $c=p_{1,i}$ with the constraint $0\le y-u<Q$, or $c=p_{2,i}$ with the constraint
$y-u\ge Q$). For any predecessor fuel $u$ and target fuel $y$, the candidate cost of
reaching $y$ from $u$ under this regime is
\[
\widetilde C(u) + (y-u)c
= \big(\widetilde C(u)-u c\big) + y c.
\]
Therefore, for any fixed target $y$, choosing the best predecessor under unit price $c$
is equivalent to minimizing the \emph{key}
\[
K_c(u) := \widetilde C(u) - u c
\]
over the set of feasible predecessors for $y$.

\medskip
\emph{(i) If $c=p_{1,i}$ (buying $<Q$ units).}
The feasible predecessor set is the sliding window
\[
W_y := \{u:\; y-Q+1 \le u \le y\},
\]
exactly as in Case~1 above, except that $C(\cdot)$ is replaced by $\widetilde C(\cdot)$.
With the same tie-breaking rule (choose the \emph{largest} minimizer of $K_c$ over $W_y$),
the selected predecessor $u^*(y)$ is nondecreasing in $y$ by the same window-shift argument.
Hence, for any fixed predecessor fuel $f$, the set $\{y:\; u^*(y)=f\}$ is a (possibly empty)
interval of consecutive fuel values, i.e., $[f+a,f+b]$.

\medskip
\emph{(ii) If $c=p_{2,i}$ (buying $\ge Q$ units).}
The feasible predecessor set is the expanding prefix
\[
P_y := \{u:\; u \le y-Q\},
\]
exactly as in Case~2 above (again with $\widetilde C(\cdot)$ in place of $C(\cdot)$).
With the same tie-breaking rule, the selected predecessor $v^*(y)$ is nondecreasing in $y$,
so for any fixed predecessor fuel $f$ the set $\{y:\; v^*(y)=f\}$ is a (possibly empty)
interval.

\medskip
Thus the contiguity conclusion holds unchanged even when $S_b(i)$ is stored approximately:
implicit states behave exactly like explicit states because their values $\widetilde C(u)$ are
well-defined and the predecessor-selection problem depends only on these values and the
feasibility windows/prefixes, not on how the values are stored.\end{proof}

\textbf{Implications of Lemma~\ref{lem:contiguity}:}
This lemma implies that if a state \( \DP(i,g) \) is optimal at points \( f \) and \( p > f \) using a specific type of fuel from $i$, its segment $S_1$ must be optimal to all other points between \( f \) and \( p \). So, by this locality property, we only need to examine if the adjacent segment to $S_1$ is dominated by it before checking subsequent states in order to determine the optimal states for any $S(i)$ from $S_b(i)$. If there isn’t any state that dominates its adjacent state, that means that no state can dominate another so these states from $S_b(i)$ can be transferred to $S(i)$ unchanged.

\begin{observation}\label{obs:p1-fails-then-only-p2}
If a state in \(S(i)\) doesn’t generate any state that covers any points of the next station when \(p_{1,i}\)
(or its own $p(i,\cdot)$ if $S(i)$ is implicit) is checked for updating, then its only possibility to lead to an
optimal solution in \(S(i+1)\) is to be used with \(p_{2,i}\).
\end{observation}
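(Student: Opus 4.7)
The plan is to argue by exhaustion over the ways the state $\mathrm{dp}(i,f) \in S(i)$ can give rise to an optimal state in $S_b(i+1)$, and thus in $S(i+1)$. Since $S_b(i+1)$ is obtained from $S(i)$ by translating all inventory levels by $-d(i,i+1)$, the state $\mathrm{dp}(i,f)$ (or any point in the segment it anchors) can survive into $S_b(i+1)$ only through inventory levels at least $d(i,i+1)$. So the question reduces to: what operation at station $i$ can make the segment anchored at $\mathrm{dp}(i,f)$ reach an inventory level $\geq d(i,i+1)$?

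I would then enumerate the admissible segment-extension mechanisms drawn from Definition~7. They are precisely three: extension with the newly available $p_{1,i}$; extension with the state's legacy discount price $p(i,f)$ up to the residual amount $r(i,f)$; and the combined mode in which $p(i,f)$ is consumed first and $p_{1,i}$ thereafter. The hypothesis of the observation asserts that after applying whichever of these updates is relevant ($p_{1,i}$ in the explicit case, or the cheaper legacy $p(i,\cdot)$ in the implicit case), the resulting segment's right endpoint lies strictly below $d(i,i+1)$. Because the three equations are piecewise-affine concatenations of the only two cheap fuel sources available to the state, no other combination of them can push its right endpoint past $d(i,i+1)$ either.

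This leaves a single remaining lever at station $i$: a bulk purchase of at least $Q$ units at the discount price $p_{2,i}$, which by Definition~11 injects a jump of size $\geq Q$ into the explicit fuel level and relabels the legacy price to $p_{2,i}$. Any intermediate amount of $p_{2,i}$ strictly less than $Q$ is infeasible under the pricing function $p_t(\cdot)$ of Section~3, and carrying the state over without buying anything at $i$ is already ruled out by the hypothesis. Hence the only remaining way for $\mathrm{dp}(i,f)$ to contribute to an optimal state in $S(i+1)$ is to be used with $p_{2,i}$.

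The main subtle point I anticipate is ruling out mixed contributions in which the state is combined with a sub-$Q$ quantity of station-$i$ fuel not already covered by the three equations. A priori one might worry about joint use of a small $p_{1,i}$ top-up together with leftover legacy $p_2$; however, by Definition~7 together with Observation~1 (which bars generating new states via station-$i$ fuel when the state already reaches $i+1$), every such mixture is subsumed by one of the three extension modes enumerated above. Once the hypothesis forces all three of those to stop short of $d(i,i+1)$, the $p_{2,i}$ bulk update is the only surviving mechanism, which yields the claim.
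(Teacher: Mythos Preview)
The paper does not actually prove this observation; it is stated bare and treated as an immediate consequence of the definitions. Your argument is correct and simply spells out the exhaustive case analysis the paper leaves tacit: once the non-bulk extension modes of Definition~8 (legacy $p(i,f)$, current $p_{1,i}$, and their concatenation) fail to push the segment to inventory $\ge d(i,i{+}1)$, the only purchase at station $i$ compatible with the pricing function of Section~\ref{sec:3} is an order of size $\ge Q$ at $p_{2,i}$, and carry-over is already excluded since the hypothesis forces $f<d(i,i{+}1)$.

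One small remark: your invocation of Observation~1 in the last paragraph is not really needed here, since the state under consideration has $f<d(i,i{+}1)$ and Observation~1 concerns states that already reach $i{+}1$. The exhaustiveness of the three extension modes follows directly from Definition~8 and the two-tier structure of $p_t(\cdot)$, without that extra appeal. This does not affect correctness.
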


\begin{lemma}[Dominance from a terminus checkpoint]\label{lem:terminus-dominance}
Let $S_1$ and $S_2$ be adjacent state-segments in $S(i)$ with explicit states
$\DP(i,f)$ and $\DP(i,g)$, where $g>f$. Let $t$ be the terminus of $S_2$.
Assume that at $t$ the state generated from $\DP(i,f)$ using fuel from period $i$
(or using its own legacy $p_2$ fuel if that is the optimal fuel) is strictly cheaper than
the state from $S_2$:
\[
C_1(t)<C_2(t).
\]
If, on the interval of $S_2$ to the \emph{left} of $t$, the \emph{unit price of the fuel used}
by $S_1$ is everywhere at least as expensive as the one used by $S_2$ (i.e., $c_1^{\leftarrow}(x)\ge c_2^{\leftarrow}(x)$
for all such $x$), then $S_1$ dominates $S_2$.

Conversely, if no other segment is optimal at the terminus $t$ of $S_2$, then $S_2$
covers at least the single point $t$.
\end{lemma}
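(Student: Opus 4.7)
The plan is to recast the hypothesis on unit prices as a monotonicity statement for the difference $D(x):=C_1(x)-C_2(x)$ on the region currently covered by $S_2$, and then propagate the strict sign at the terminus leftward across that region. I would first write $C_1$ and $C_2$ explicitly via the state equations of Definition~7: each is piecewise linear with at most one breakpoint (the switch point $r(i,\cdot)$ at which legacy $p_2$ fuel is exhausted and $p_{1,i}$ takes over), and at any $x$ its slope coincides with $c_j^{\leftarrow}(x)$, the unit price of the fuel in use at $x$. The boundary slopes of the segments used by $S_1$ at $x$ to the left of $t$ are well defined because, by Lemma~3, the points generated from $\mathrm{dp}(i,f)$ under a fixed fuel regime form a single continuous interval, so extending $C_1$ to the range of $S_2$ is unambiguous.

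For the forward direction, I would use the hypothesis $c_1^{\leftarrow}(x)\ge c_2^{\leftarrow}(x)$ throughout the interval of $S_2$ strictly to the left of $t$ to conclude that $D$ has non-negative one-sided derivative everywhere there and is therefore non-decreasing on that interval. Since $D(t)<0$ by assumption, this forces
\[
D(x)\le D(t)<0 \qquad \text{for every } x \text{ in the region of } S_2,
\]
i.e., $C_1(x)<C_2(x)$ pointwise. By Definition~9 this is precisely strict dominance of $S_2$ by $S_1$, justifying the replacement.

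For the converse clause, the terminus $t$ is, by Definition~11, a point at which the state $\mathrm{dp}(i,g)$ was verified optimal at the moment $S_2$ was created or last readjusted; the only way $S_2$ could lose $t$ afterwards is for some other segment $S'$ to strictly beat it there. The assumption that no other segment is optimal at $t$ rules this out, so $S_2$ remains optimal at $t$ and continues to cover at least the singleton $\{t\}$.

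The main obstacle, such as it is, lies in making sure the monotonicity of $D$ really survives the piecewise structure of $C_1$ and $C_2$: the two functions may switch fuel regimes at different abscissae inside the region of $S_2$. However, because the hypothesis is phrased pointwise and both $C_j$ are continuous, integrating the non-negative derivative across the (at most two) breakpoints is routine, and no additional case analysis beyond the one already carried out in Lemma~3 is needed.
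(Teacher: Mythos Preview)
Your monotonicity argument for $D(x)=C_1(x)-C_2(x)$ on the interval of $S_2$ to the \emph{left} of $t$ is correct and coincides with the paper's treatment there. The gap is that you never handle the part of $S_2$'s coverage lying strictly to the \emph{right} of $t$. By Definition~11 the terminus is the last point at which $S_2$ can use its legacy $p_2$ fuel; the segment may well extend beyond $t$ using $p_{1,i}$, so the right endpoint of $S_2$ need not equal $t$. The hypothesis $c_1^{\leftarrow}(x)\ge c_2^{\leftarrow}(x)$ is only assumed for $x<t$, and your non-decreasing conclusion for $D$ therefore stops at $t$; the leap from ``$D(x)\le D(t)<0$ on $[g,t]$'' to ``for every $x$ in the region of $S_2$'' is unjustified as written.

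The paper fills this hole with a short but essential second half: to the right of $t$, $S_2$'s slope is exactly $p_{1,i}$ (that is what ``past the terminus'' means), whereas $S_1$ can always buy at $p_{1,i}$ or at its own cheaper legacy $p_2$, so its right-hand unit price satisfies $c_1^{\rightarrow}\le p_{1,i}$. Hence $D$ is non-increasing on the right of $t$, and $D(x)\le D(t)<0$ there too. Adding this argument completes dominance on all of $S_2$'s coverage. Your treatment of the converse clause is fine.
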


\begin{proof}
Let $C_1(\cdot)$ and $C_2(\cdot)$ be the costs induced at period $i$ by $S_1$ and $S_2$.
Each is piecewise-affine; on any subinterval where both use fixed fuel types the slopes are
constant and equal to the \emph{unit prices of the fuel being used locally}.

\emph{Left of $t$.} Consider the maximal subinterval $I\subset[0,t]$ within $S_2$’s
left coverage on which both fuel choices remain fixed; denote their unit prices by
$c_1^{\leftarrow}$ and $c_2^{\leftarrow}$. For any $x\in I$,
\[
C_1(x)-C_2(x)
= \big(C_1(t)-C_2(t)\big) + (c_2^{\leftarrow}-c_1^{\leftarrow})(t-x)
\le C_1(t)-C_2(t) < 0,
\]
because $c_1^{\leftarrow}\ge c_2^{\leftarrow}$ by hypothesis. Hence $C_1(x)<C_2(x)$ on $I$.
If there is another kink further left, repeat the same argument on the next subinterval.
Thus $S_1$ is strictly better than $S_2$ everywhere to the left of $t$ within $S_2$’s coverage.

\emph{Right of $t$.} By definition of terminus, either (i) $S_2$ ends at $t$, or
(ii) to the right it can only use regular fuel $p_{1,i}$ (slope $p_{1,i}$). In case (i) there is nothing
to prove. In case (ii), $S_1$ can also use $p_{1,i}$ (or a cheaper legacy discounted fuel $p_2$ if available),
so its right-hand unit price $c_1^{\rightarrow}\le p_{1,i}$. For any $x\ge t$ in $S_2$’s right
extension,
\[
C_1(x)-C_2(x)=\big(C_1(t)-C_2(t)\big) + (c_1^{\rightarrow}-p_{1,i})(x-t)\le C_1(t)-C_2(t) < 0,
\]
so $C_1(x)<C_2(x)$ there as well.

Combining both sides, $S_1$ is strictly cheaper than $S_2$ on all points covered by $S_2$,
so $S_1$ dominates $S_2$.

\emph{Conversely.} If no other segment is optimal at $t$, then $S_2$ itself attains the
minimum at $t$, hence it covers at least that point; by the segment’s piecewise-affine
construction it covers a nonempty interval containing $t$.
\end{proof}

\begin{observation}\label{obs:partial-overlap}
A segment $S_1$ can cover a region that includes points from its next segment without dominating it.
In that case it cannot cover the second segment’s terminus point.
Our algorithm does not always resolve partial coverings via updates since it is computationally costly.
Since the region in question is always bounded by the two segments, we can calculate any point in that region
by generating the value at that point using both segments and taking the minimum.
\end{observation}

\paragraph{Implication of \cref{lem:terminus-dominance}.}
It suffices to check whether a state is better at the terminus of another state to know if the second segment is dominated.

\begin{lemma}[Same-slope dominance]\label{lem:same-slope-dominance}
If a state $\DP(i,f)$ from segment $S_1$ (extended to point $w\ge f$) generates a cheaper state than
$\DP(i,w)$ of segment $S_2$ at point $w$, and the optimal fuel of both segments has the same price per unit
throughout the entire interval of $S_2$, then $S_1$ dominates $S_2$.
\end{lemma}

\begin{proof}
Let the common unit price be $c$. For any subsequent point $w+p$ (with $p\ge 0$), the values produced by the
two segment equations are
\[
\mathrm{Eq}_{f}(w) + c\cdot p
\quad\text{and}\quad
\mathrm{Eq}_{w}(w) + c\cdot p.
\]
Since $\mathrm{Eq}_{f}(w) < \mathrm{Eq}_{w}(w)$ by assumption, the inequality persists for all $p\ge 0$,
so $S_1$ is cheaper than $S_2$ throughout $S_2$’s interval.
\end{proof}

\paragraph{Implications of \cref{lem:same-slope-dominance}.}
This applies when two adjacent state segments have fuel of the same price available to them and then a new cheaper fuel becomes available to both of them so that the first dominates the second.
The other case that this can be applied is when the first segment has more expensive fuel than the second and then a new cheaper fuel becomes available to both of them so that the first dominates the second.
This isn’t the case when the fuel of the second state is less expensive than the optimal fuel of the first state. In such a case, which happens when one of the two states has received cheaper $p_2$ fuel in a bulk-update, the terminus must be used.

\paragraph{On Bulk-updates.}
As it will be shown in the next sections all the segments in $S_b(i)$ whose current right endpoint lies below $d(i,i+1)$ are bulk-updated. These segments receive initially $Q$ units of $p_{2,i}$ and their internal $p_2$ fuel becomes $p_{2,i}$.
Some of these segments may overlap with other segments in the range $[Q,2Q]$. In order to determine which segments are optimal and remove the rest efficiently, we will use a procedure that uses the following lemma.

\begin{lemma}\label{lem:linemerge-band}
When creating $S(i)$ from $S_b(i)$, let $L_1$ be the set of state segments whose coverage
intersects the band $[Q,2Q]$ and that were created at earlier stations (so within $[Q,2Q]$
their unit price is either legacy $p(i,\cdot)$ or $p_{1,i}$), and let $L_2$ be the
set of newly bulk-updated segments at station $i$ (so within $[Q,2Q]$ they use $p_{2,i}$).
Then exactly one of the following holds:
\begin{enumerate}[label=\textbf{(\arabic*)},leftmargin=*]
\item \textbf{Complete takeover.} If the lower envelope of $L_2$ is no higher than the
lower envelope of $L_1$ at some point $x$ in the region covered by $L_1\cap[Q,2Q]$, then
for all $x'\ge x$ in that region the lower envelope of $L_2$ is no higher than that of $L_1$.
Consequently, all segments of $L_1$ to the right of $x$ are dominated by $L_2$(e.g $L_2$ provides no worse canditate segments in that region).

\item \textbf{No takeover inside (tail extension).} If the lower envelope of $L_2$ is strictly
above that of $L_1$ at every point of $L_1\cap[Q,2Q]$, then all segments of $L_2$ are
no-better throughout $L_1$’s coverage in $[Q,2Q]$. However, the rightmost $L_2$ segment may
become optimal immediately to the \emph{right} of $L_1$’s band—e.g., on $(2Q,\,2Q+d(i,i{+}1)]$.
\end{enumerate}
\end{lemma}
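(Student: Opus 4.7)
The plan is to reduce both items to a single monotonicity property of the shifted gap
\[F(x) := \mathrm{env}_{L_1}(x) - \mathrm{env}_{L_2}(x)\]
on the common region $L_1\cap[Q,2Q]$. Once $F$ is shown to be non-decreasing there, Item~1 (complete takeover) is immediate: if $F(x_0)\le 0$ at some $x_0$, then $F(x')\le 0$ for every $x'\ge x_0$ in the region, so the $L_2$ envelope stays at or below the $L_1$ envelope and every $L_1$ segment to the right of $x_0$ is dominated. Item~2 (tail extension) follows by the contrapositive: if $F(x)>0$ throughout $L_1\cap[Q,2Q]$, no $L_2$ segment is optimal in that band; however, the $L_2$ segment whose normalized intercept $\mathrm{dp}(i,u)-u\,p_{2,i}$ is minimum continues with slope $p_{2,i}$ past the right end of $L_1$'s coverage, so it is the only candidate on $(2Q,\,2Q+d(i,i{+}1)]$ and becomes optimal there by default.

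To prove the monotonicity I split
\[F(x) = \bigl(\mathrm{env}_{L_1}(x) - x\,p_{2,i}\bigr) + \bigl(x\,p_{2,i} - \mathrm{env}_{L_2}(x)\bigr),\]
and argue each summand is non-decreasing. For the $L_2$ summand, every bulk-updated segment has slope exactly $p_{2,i}$ on $[Q,2Q]$, so $\mathrm{env}_{L_2}$ is the pointwise minimum of parallel lines $\mathrm{dp}(i,u)-u\,p_{2,i}+x\,p_{2,i}$ indexed by states with $u\le x-Q$; as $x$ grows more indices become feasible, so the minimum intercept only decreases and $\mathrm{env}_{L_2}(x)-x\,p_{2,i}$ is non-increasing, making the second summand non-decreasing. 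For the $L_1$ summand, inside any single $L_1$ segment the local slope is either a legacy $p(i,\cdot)\ge p_{2,i}$ (by the non-increasing-price hypothesis) or $p_{1,i}\ge p_{2,i}$ (by the tier definition), so the summand is non-decreasing on each piece.

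The step I expect to be the main obstacle is controlling the behaviour of $\mathrm{env}_{L_1}$ at a boundary where $L_1$ switches from one segment $T$ (left) to the adjacent one $T'$ (right) at some fuel level $a$: one has to rule out a downward jump of $\mathrm{env}_{L_1}$ there. I would use that $T$ can always be extended via $p_{1,i}$ past its own terminus, so $T$ and $T'$ in fact overlap around $a$; then pointwise optimality of $T$ at $a$ gives $T(a)\le T'(a)$, and $T'(a+1)=T'(a)+s_{T'}$ yields
\[\mathrm{env}_{L_1}(a+1)-\mathrm{env}_{L_1}(a)=T'(a+1)-T(a)\ge s_{T'}\ge p_{2,i}.\]
Lemma~3 (contiguity of optimal coverage) ensures $L_1$'s coverage in $[Q,2Q]$ is a single interval with no interior gaps, and Lemma~2 (monotone legacy prices) keeps the slopes $s_{T'}$ consistently $\ge p_{2,i}$ across successive segments. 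With this transition analysis in place, the $L_1$ summand is globally non-decreasing on the contiguous coverage, $F$ itself is non-decreasing, and the two items of the lemma fall out as described in the first paragraph.
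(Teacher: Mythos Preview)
Your approach coincides with the paper's: both compare slopes on $[Q,2Q]$ (every $L_1$ piece has unit price $\ge p_{2,i}$, every $L_2$ piece has unit price exactly $p_{2,i}$) to conclude that the gap between the two envelopes is monotone, from which the two alternatives drop out. Your extra care about downward kinks in the $L_2$ envelope (new base states becoming feasible as $x$ grows) and about transitions between adjacent $L_1$ segments is a genuine improvement in rigor over the paper's terse version, which simply asserts the $L_2$ envelope ``has slope $p_{2,i}$'' and that the $L_1$ envelope ``has slope $\ge p_{2,i}$ (piecewise-constant)''.

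One correctable slip: the inequalities in your first paragraph are reversed. With $F=\mathrm{env}_{L_1}-\mathrm{env}_{L_2}$, ``$L_2$ no higher than $L_1$'' is $F\ge 0$, not $F\le 0$; and ``$F$ non-decreasing and $F(x_0)\le 0$'' does \emph{not} yield $F(x')\le 0$ for $x'\ge x_0$. Swap $\le$ for $\ge$ in the Item~1 deduction and $>$ for $<$ in the Item~2 contrapositive and the argument goes through exactly as you intend.
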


\begin{proof}
By the non-increasing price assumption, within $[Q,2Q]$ every segment in $L_2$ uses unit
price $p_{2,i}$, while every segment in $L_1$ uses a unit price $\ge p_{2,i}$ (either a
legacy $p(i,\cdot)$ or $p_{1,i}$ due to the monotonic nature of the costs defined by the problem). Thus, on $[Q,2Q]$, \emph{all} lines from $L_2$ have slope
$p_{2,i}$, and \emph{every} line contributing to the lower envelope of $L_1$ has slope
$\ge p_{2,i}$. Furthermore we know that at any point in $[Q,2Q]$ all the elements of $L_1$ generate larger or equal cost states at that point using their legacy $p(i,\cdot)$ or $p_{1,i}$ compared to the minimum in $L_1$.

\emph{Case 1 (complete takeover).} Suppose there exists $x$ in $L_1\cap[Q,2Q]$ where an $L_2$
segment attains a cost no higher than the $L_1$ envelope. Since the $L_2$ envelope has slope
$p_{2,i}$ and the $L_1$ envelope has slope $\ge p_{2,i}$ (piecewise-constant), the difference
“$L_2$ minus $L_1$” is nonincreasing in $x$. Hence for all $x'\ge x$ the $L_2$ envelope remains
no higher than the $L_1$ envelope. This implies all $L_1$ segments to the right (and the right
suffix of the segment containing $x$) are dominated.

\emph{Case 2 (no takeover inside).} If for all $x$ in $L_1\cap[Q,2Q]$ the $L_2$ envelope is
strictly higher than the $L_1$ envelope, then $L_2$ is suboptimal there. Outside that band,
$L_1$ may no longer be defined (e.g., beyond $2Q$); the $L_2$ envelope can then become optimal
as the only available candidate, in particular just to the right of $2Q$, up to
$2Q+d(i,i{+}1)$ depending on feasibility. This yields the tail-extension behavior.
\end{proof}

\subsection{MV thresholds and their use}\label{subsec:mv}

Before describing the full procedure, we introduce a threshold quantity that lets us
apply \cref{lem:monotone-legacy-p2,lem:contiguity} together with the dominance tests efficiently.

\paragraph{Setup.}
Let $S_1,S_2$ be two \emph{adjacent} segments in $S(i)$ with explicit states
$\DP(i,f)$ (left) and $\DP(i,g)$ (right), where $g>f$.
Let $c$ denote a unit fuel price available at station $i$ (in our algorithm $c\in\{p_{1,i},p_{2,i}\}$).
For a chosen checkpoint, define $MV(S_1)$ as the \emph{largest} value of $c$ for which the
state generated from $\DP(i,f)$ is no more expensive than the competing state from $S_2$.
Whenever the actual price at $i$ satisfies $c\le MV(S_1)$, $S_1$ wins at that checkpoint; combined
with the dominance lemmas, this lets us remove $S_2$.

\paragraph{Regular MV (boundary checkpoint).}
This applies when, at the boundary $x=g$, \emph{both} segments would buy at station $i$ with
the \emph{same} unit price $c$ (e.g., both use $p_{1,i}$, or both use $p_{2,i}$ after a bulk update).
Equating costs at $x=g$ gives
\[
\DP(i,f)+(g-f)c \ \le\ \DP(i,g)
\quad\Longleftrightarrow\quad
c \ \le\ \frac{\DP(i,g)-\DP(i,f)}{g-f}.
\]
We set
\begin{equation}
\label{eq:mv-regular}
MV(S_1)\;:=\;\frac{\DP(i,g)-\DP(i,f)}{g-f}.
\end{equation}

\paragraph{Terminus MV (used when $S_2$ uses $p_{2,i}$ on its right).}
Let $t=\mathrm{terminus}(S_2)=g+T$ be the last point where $S_2$ uses its ``$p_2$'' fuel
on the right of $g$. In the bulk-updated case of interest here, $S_2$ uses the \emph{current}
discounted price $p_{2,i}$ on $[g,t]$, so
\[
C_2(t)=\DP(i,g)+T\,p_{2,i}.
\]
From $\DP(i,f)$, the additional units to reach $t$ are $L=(g-f)+T$.
Let $a:=min(L,r(i,f))$(If $a$ is equal to $L$ then the formula below can't be used since no amount of $c$ can be used, this case can be resolved when two segments first become adjacent and before any MV value is calculated) be the number of those units that $S_1$ can cover using its
\emph{own legacy} discounted price $p(i,f)$; the remaining $(L-a)$ units must be bought at $i$
at unit price $\min\{c,\,p_{1,i}\}$. 

Two regimes are relevant:

\begin{itemize}
\item \textbf{If $c\ge p(i,f)$}, $S_1$ uses its legacy $p(i,f)$ for $a$ units and price $c$
for the remaining $(L-a)$ units:
\[
C_1(t)=\DP(i,f)+a\,p(i,f)+(L-a)\,c,
\]
so
\begin{equation}
\label{eq:mv-term-regA}
MV_{\mathrm{term}}(S_1)
=\frac{\DP(i,g)+T\,p_{2,i}-\DP(i,f)-a\,p(i,f)}{\,L-a\,},
\qquad\text{(apply with $c\ge p(i,f)$)}.
\end{equation}
When checking against $c=p_{2,i}$, the purchase at $i$ in this regime is $(L-a)$ units, so the
discount $p_{2,i}$ is actually \emph{available} only if $(L-a)\ge Q$; otherwise compare using $c=p_{1,i}$.

\item The formula where no $S_1$ legacy fuel is used must also be calculated. After that the $MV_{\mathrm{term}}$ with the highest value from the two is kept.
\begin{equation}\label{eq:mv-term-regB}
C_1(t)=\DP(i,f)+L\,c
\quad\Longrightarrow\quad
MV_{\mathrm{term}}(S_1)
=\frac{\DP(i,g)+T\,p_{2,i}-\DP(i,f)}{\,L\,}.
\end{equation}
When checking against $c=p_{2,i}$, the discount requires $L\ge Q$ (here all $L$ units are
bought at $i$); if $L<Q$, compare with $c=p_{1,i}$ instead.
\end{itemize}

\paragraph{If $S_2$ uses its \emph{own legacy} $p_2$ on the right.}
In this case $S_2$’s unit price on $[g,t]$ is $p(i,g)$, and we only need to check the \emph{last}
point where that legacy fuel is used, i.e., the same terminus $t$ (not $S_2$’s rightmost coverage).
The formulas above apply verbatim with $p_{2,i}$ replaced by $p(i,g)$ in $C_2(t)$.

\paragraph{Which MV to use.}
Use the \emph{Regular MV} \eqref{eq:mv-regular} at the boundary $x=g$ whenever both segments
would buy at $i$ with the same unit price (equal slopes). Use the \emph{Terminus MV} when the
right segment $S_2$ is using $p_{2,i}$ on its right (bulk-updated case); if $S_2$ is
using its own legacy $p_2$, evaluate the terminus comparison at the last legacy point $t$ as
noted above.

\paragraph{How we store and use $MV$.}
For each adjacent pair $(S_1,S_2)$ we store either $MV_{\mathrm{reg}}(S_1)$ or the appropriate
$MV_{\mathrm{term}}(S_1)$ and maintain these keys in decreasing order. At station $i$, comparing
the actual price $c\in\{p_{1,i},p_{2,i}\}$ against the stored thresholds (and the quantity
conditions $L\ge Q$ or $L-a\ge Q$ when $c=p_{2,i}$) allows us to perform individual updates:
whenever $c\le MV(S_1)$, $S_1$ wins at the checkpoint and $S_2$ is removed with neighbors
re-adjusted.

\paragraph{High-level sketch for constructing the $Q$-approximate $S(i)$ from $S_b(i)$.}
First, determine the state at zero remaining fuel for $S_b(i{+}1)$ by comparing:
(i) the best carry-over state that already reaches $d(i,i{+}1)$ without purchasing at $i$, and
(ii) the cheapest “top-up” among the segments that cannot reach $d(i,i{+}1)$ (adding just enough fuel, possibly $\ge Q$, to reach exactly $d(i,i{+}1)$). Keep the cheaper of the two as $\DP(i{+}1,0)$.

Next, split the current structure at $d(i,i{+}1)$ into:
$X$ = segments with right endpoint $< d(i,i{+}1)$ (cannot reach $i{+}1$) and
$R$ = the remaining segments (can reach $i{+}1$). If a segment straddles $d(i,i{+}1)$, cut it at that point and create a new segment starting at $d(i,i{+}1)$.

Bulk-update $X$ by (lazily) adding $Q$ units at price $p_{2,i}$ to each segment, then run individual updates with $p_{2,i}$ inside $X$ to delete dominated segments. Line-merge the resulting bulk-updated segments with the portion of $R$ that covers the $[Q,2Q]$ band, and then merge $X$ back into the main structure $R$ (applying the relevant lazy updates). Finally, run individual updates with $p_{1,i}$ to remove any remaining dominated segments.

\begin{figure}[ht!]
\centering
\begin{tikzpicture}[
  x=10mm, y=10mm, >=Latex, line cap=round, line join=round,
  every node/.style={font=\small}
]
\def\xmin{-0.5} \def\xmax{10.5}
\def\ybar{1.00} \def\H{0.60} 

\def\aL{0}   \def\aR{3.5} 
\def\bL{2}   \def\bR{6}   
\def\cL{5}   \def\cR{8}   

\def\OLoneL{2}   \def\OLoneR{3.5} 
\def\OLtwoL{5}   \def\OLtwoR{6}   

\tikzset{obox/.style={fill=white, inner sep=1.6pt, outer sep=0pt}}

\draw[->,thick] (\xmin,0) -- (\xmax,0) node[below right] {$x$};
\foreach \t in {0,...,10}{
  \draw (\t, 0.08) -- (\t, -0.08);
  \node[obox,below] at (\t,-0.08) {\t};
}

\draw[thick] (\aL,\ybar) rectangle (\cR,\ybar+\H);

\foreach \x in {\aR,\bL,\bR,\cL}{
  \draw[densely dashed] (\x,\ybar) -- (\x,\ybar+\H);
}

\fill[pattern=north east lines] (\OLoneL,\ybar) rectangle (\OLoneR,\ybar+\H);
\fill[pattern=north east lines] (\OLtwoL,\ybar) rectangle (\OLtwoR,\ybar+\H);

\draw[decorate,decoration={brace,amplitude=5pt}]
  (\aL,\ybar+\H+0.15) -- node[obox,above=4pt] {Segment $A:\,[\aL,\aR]$}
  (\aR,\ybar+\H+0.15);

\draw[decorate,decoration={brace,amplitude=5pt}]
  (\bL,\ybar+\H+0.90) -- node[obox,above=4pt] {Segment $B:\,[\bL,\bR]$}
  (\bR,\ybar+\H+0.90);

\draw[decorate,decoration={brace,amplitude=5pt}]
  (\cL,\ybar+\H+0.15) -- node[obox,above=4pt] {Segment $C:\,[\cL,\cR]$}
  (\cR,\ybar+\H+0.15);

\pgfmathsetmacro{\legendx}{\cR + 0.9}
\pgfmathsetmacro{\legendy}{\ybar + \H - 0.05}

\node[font=\bfseries,anchor=west] at (\legendx,\legendy+0.45) {Legend};

\draw[thick] (\legendx,\legendy+0.10) -- ++(1.0,0);
\node[anchor=west] at (\legendx+1.2,\legendy+0.10) {External boundary (solid)};

\draw[densely dashed] (\legendx,\legendy-0.20) -- ++(1.0,0);
\node[anchor=west] at (\legendx+1.2,\legendy-0.20) {Overlap boundary (dashed)};

\fill[pattern=north east lines] (\legendx,\legendy-0.55) rectangle ++(1.0,0.20);
\node[anchor=west] at (\legendx+1.2,\legendy-0.45) {Overlap area};

\draw[decorate,decoration={brace,mirror,amplitude=5pt}]
  (\legendx,\legendy-1.05) -- ++(1.0,0);
\node[anchor=west] at (\legendx+1.2,\legendy-1.05) {Segment extent (brace)};

\end{tikzpicture}
\caption{The figure represents a number of segments on an X-axis showing their overlapping regions. Notice that the terminus point of a segment is always in the non-overlapping part of a segment while the current endpoints of the segment in the overlap area may be unknown.}
\label{fig:segments_final_no_terminus}
\end{figure}
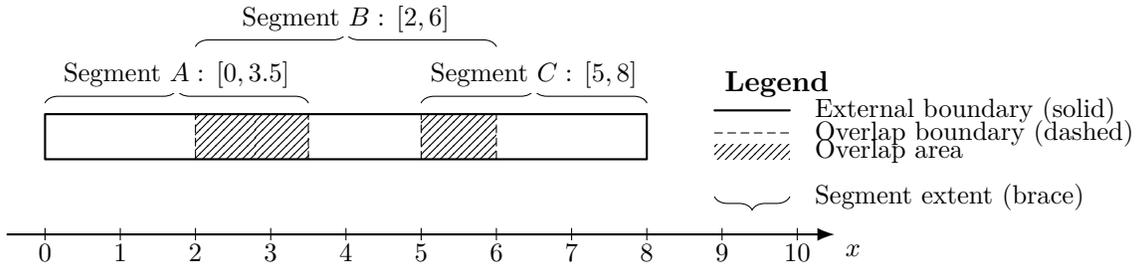

\subsection{Extending the Result to Accommodate Linear Holding Costs}
When holding costs are incorporated into the model, they affect the solution in three ways. First, the value of each state at station $i$ increases by $h(t)$ times its remaining fuel amount (e.g, if it has $x$ remaining fuel it must increase by $xh(t)$ units), reflecting the cost of carrying inventory to the next station. Second, the effective price of $p_{2,i}$ fuel increases by $h(t)$ per unit (see the appendix on how this can be done with lazy propagation). Third, as demonstrated in the following lemma, all MV values increase uniformly by the constant $h(t)$, which can be efficiently implemented as a lazy update operation on the binary search tree maintaining these values.

\begin{lemma}[Holding-cost shift for MV values at station $i$]\label{lem:mv-holding-shift}
Fix station $i$ and a linear holding cost $h_i$ per unit of inventory at the end of period $i$.
Implement the holding-cost update at station $i$ by
\[
\DP(i,x)\ \mapsto\ \DP(i,x)+h_i\,x \quad\text{for all states }x,
\qquad
p_{1,i},\,p_{2,i}\ \mapsto\ p_{1,i}+h_i,\;p_{2,i}+h_i.
\]
Then, for every adjacent pair of segments $(S_1,S_2)$ in $S(i)$ with explicit states
$\DP(i,f)$ and $\DP(i,g)$, $g>f$:
\begin{enumerate}
\item the \emph{Regular MV} (boundary checkpoint $x=g$) satisfies
\[
MV'(S_1)\ =\ MV(S_1)+h_i,
\]
\item and any \emph{Terminus MV} computed at a fixed checkpoint $x_0$
(e.g., $x_0=\mathrm{terminus}(S_2)$) satisfies
\[
MV'_{\mathrm{term}}(S_1)\ =\ MV_{\mathrm{term}}(S_1)+h_i.
\]
\end{enumerate}
Consequently, the ordering of all MV values is preserved by the holding-cost update.
\end{lemma}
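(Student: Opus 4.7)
The plan is to verify each item by direct algebraic substitution, relying on a consistency observation: the holding-cost update must implicitly shift every legacy price $p(i,\cdot)$ by $h_i$ as well. Otherwise, the segment-equation identity $\text{dp}(i,f+a)=\text{dp}(i,f)+a\,p(i,f)$ that materializes implicit states would break after the update, since the implicit value computed via the equation would disagree with $\text{dp}(i,f+a)+h_i(f+a)$ by exactly $h_i\,a$, forcing $p(i,f)\mapsto p(i,f)+h_i$. Once this consistency point is granted, the rest is a short calculation.

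For item 1, I substitute $\text{dp}(i,g)\mapsto\text{dp}(i,g)+h_i g$ and $\text{dp}(i,f)\mapsto\text{dp}(i,f)+h_i f$ into \eqref{eq:mv-regular}; the numerator gains an extra $h_i(g-f)$ which divides out to give $MV(S_1)+h_i$. For item 2, I would first rewrite $C_1(t)$ using the segment-equation identity as $C_1(t)=\text{dp}(i,f+a)+(L-a)\,c$, absorbing the $a\,p(i,f)$ term. After the update, each state value at inventory $u$ gains $h_i u$, and $p_{2,i}$ (or $p(i,g)$ in the variant where $S_2$ uses its own legacy) gains $h_i$. Equating $C_1(t)=C_2(t)$ and solving for the threshold $c$, the numerator shift works out to $h_i g + h_i T - h_i(f+a) = h_i\bigl((g-f)+T-a\bigr) = h_i(L-a)$, using $L=(g-f)+T$; dividing by $L-a$ yields $+h_i$ exactly. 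The sub-regime $c\le p(i,f)$ (no legacy consumed by $S_1$) is the same calculation with $a=0$, and substituting $p(i,g)$ for $p_{2,i}$ (when $S_2$ uses its own legacy on the right) does not affect the telescoping because that legacy price also shifts by $h_i$ under the same consistency argument.

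The main obstacle I anticipate is not the algebra itself but justifying the implicit update to the legacy prices; without it, the Terminus MV shift comes out to $h_i L/(L-a)\ne h_i$ when $a>0$. Granting the segment-equation consistency, the ordering claim is an immediate corollary: every stored MV key (Regular or Terminus) shifts by the same additive constant $h_i$, so the relative order of all thresholds in the BST is preserved and the lazy propagation is justified.
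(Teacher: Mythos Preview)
Your proof is correct and follows the same direct-substitution route as the paper. The only real difference is granularity. For the Terminus MV, the paper argues at the level of candidate state values: it asserts in one line that ``both state values acquire the same additive term $h_i x_0$'' (since each $C_j(x_0)$ is a cost of holding inventory $x_0$ at station $i$), cancels $h_i x_0$ from both sides of $C_1(x_0;c)\le C_2(x_0)$, and reads off $c\mapsto c+h_i$. You instead unpack $C_1(t)$ term-by-term and correctly observe that the paper's one-line claim \emph{forces} the legacy slope $p(i,f)$ to shift by $h_i$ as well---otherwise the shift on $C_1$ would be $h_i(x_0-a)$ rather than $h_i x_0$, and the threshold would move by $h_i L/(L-a)$. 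Your consistency argument (the segment equation $\text{dp}(i,f+a)=\text{dp}(i,f)+a\,p(i,f)$ must continue to hold after the update $\text{dp}(i,x)\mapsto \text{dp}(i,x)+h_i x$ applied to \emph{all} $x$) is exactly the justification the paper leaves implicit. So your version is a more explicit verification of the same identity, and it also makes clear what the lazy implementation must do to the stored legacy prices; the paper's version is terser but relies on the reader supplying that step.
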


\begin{proof}
\emph{Regular MV.} For adjacent states $\DP(i,f)$ and $\DP(i,g)$ with $g>f$, the boundary threshold is
\[
MV(S_1)=\frac{\DP(i,g)-\DP(i,f)}{g-f}.
\]
After the update,
\[
MV'(S_1)
= \frac{\big(\DP(i,g)+h_i g\big)-\big(\DP(i,f)+h_i f\big)}{g-f}
= MV(S_1)+h_i.
\]

\emph{Terminus MV.} Any terminus-based comparison is performed at a fixed checkpoint $x_0$ and yields an inequality
\[
C_1(x_0;c)\ \le\ C_2(x_0),
\]
where $c$ is the unit price used for period-$i$ purchases by $S_1$.
Under the update, both state values acquire the same additive term $h_i x_0$ and all period-$i$ unit prices shift by $+h_i$,
i.e., $c\mapsto c+h_i$. Therefore the new inequality reads
\[
C_1(x_0;c+h_i)+h_i x_0\ \le\ C_2(x_0)+h_i x_0,
\]
which is equivalent to $C_1(x_0;c+h_i)\le C_2(x_0)$.
Thus the new threshold is obtained by adding $h_i$ to the old one, i.e.,
$MV'_{\mathrm{term}}(S_1)=MV_{\mathrm{term}}(S_1)+h_i$.
\end{proof}

\section{The Algorithm}\label{sec:6}
\subsection{General Overview}

Instead of computing individual states, our algorithm computes \emph{segments} that compactly
represent contiguous ranges of optimal states together with the information needed to regenerate
any state they cover. These segments are maintained in an augmented balanced binary search tree
(BST), keyed by their terminus, and augmented with the appropriate $MV$ thresholds (Regular or
Terminus) and lazy tags.

At station $i$, we construct an approximate version of $S(i)$ from $S_b(i)$ using two kinds of updates:

\begin{itemize}
\item \textbf{Individual updates (with $p_{1,i}$ or $p_{2,i}$):} we repeatedly compare the
current price $c\in\{p_{1,i},p_{2,i}\}$ to the \emph{maximum} applicable $MV$ threshold among
adjacent segment pairs (Regular MV when both sides would use the same unit price at the boundary,
Terminus MV when the right segment uses discounted fuel on its right). Whenever $c \le MV$, the
left segment dominates the right one at the checkpoint; we remove the dominated segment and
re-adjust only its neighbors in $O(\log n)$ time.

\item \textbf{Bulk updates (with $Q$ units at $p_{2,i}$):} all segments whose current right
endpoint is strictly less than $d(i,i{+}1)$ (i.e., cannot reach station $i{+}1$) are gathered,
and a lazy ``$+Q$ at $p_{2,i}$'' tag is applied to this subset. If a segment \emph{straddles}
$d(i,i{+}1)$ we cut it at that point (creating at most one additional segment).
\end{itemize}

The per-station flow is:

\begin{enumerate}
\item \textbf{Prune with $p_{1,i}$:} run individual updates comparing $p_{1,i}$ to the stored
$MV$ thresholds to remove any immediately dominated neighbors.

\item \textbf{Form $\mathbf{\DP(i{+}1,0)}$:} compare (i) the best carry-over that already reaches
$d(i,i{+}1)$ (no purchase at $i$) with (ii) the cheapest ``top-up'' among the non-reaching
segments that raises the state to exactly $d(i,i{+}1)$ (if the purchase at $i$ is $<Q$, this
top-up is at $p_{1,i}$; otherwise at $p_{2,i}$). Keep the cheaper as $\DP(i{+}1,0)$.

\item \textbf{Split at $\boldsymbol{d(i,i{+}1)}$:} partition the BST into
$X=\{\text{segments with right end}<d(i,i{+}1)\}$
and $R=\{\text{the rest}\}$; if a segment straddles $d(i,i{+}1)$, cut it.

\item \textbf{Bulk-update $X$ at $\boldsymbol{p_{2,i}}$:} lazily add $Q$ units to each segment
in $X$ (so each can qualify for $p_{2,i}$), then run individual updates \emph{inside} $X$ with
$p_{2,i}$ to remove dominated segments. When both adjacent segments are bulk-updated, we convert
any stored Terminus MV to the Regular MV at their boundary (both sides now use the same unit
price at $g$).

\item \textbf{Line-merge in $\boldsymbol{[Q,2Q]}$:} merge the bulk-updated segments from $X$
with the portion of $R$ that covers the $[Q,2Q]$ band, using the line-merge lemma to discard
suboptimal pieces (complete takeover or tail extension).

\item \textbf{Merge back and final prune:} join the structures, recompute the cross boundary
$MV$ where needed (Regular vs Terminus, per the fuel available on each side), and run a final
individual-update pass with $p_{1,i}$ to eliminate any remaining dominated neighbors.
\end{enumerate}

The BST stores, for each segment, just enough information to recreate any state within its
coverage. Most operations are standard BST primitives (search, split, join, local re-balance)
plus lazy tags for value/position shifts and price updates. Critically, states \emph{between}
two consecutive segments may remain undetermined; if a concrete value is needed at some $x$ between
their termini, we evaluate the two bounding segments at $x$ in $O(1)$ and take the minimum after
an $O(\log n)$ search, yielding overall $O(\log n)$ query time.

\begin{algorithm}[H]
\small
\caption{Per-station construction of $S(i)$ from $S_b(i)$}\label{alg:main}
\DontPrintSemicolon
\KwIn{BST $T$ storing segments of $S_b(i)$ keyed by terminus; prices $p_{1,i},p_{2,i}$; threshold $Q$; distance $d(i,i{+}1)$.}
\KwOut{$T$ updated to a $2Q$-approximate $S(i)$; state $\DP(i{+}1,0)$.}

\For{$i \gets 1$ \KwTo $n$}{
  \tcp*[r]{(a) prune with $p_{1,i}$}
  \textsc{IndividualUpdate}$(T,1)$ \;

  \tcp*[r]{(b) best cost to reach $d(i,i{+}1)$ (carry-over vs top-up)}
  $\mathsf{cand}_{\mathrm{tmp}} \gets \min\Big\{
      \text{carry-over to } d(i,i{+}1),\
      \text{top-up to } d(i,i{+}1)\text{ using } 
      \begin{cases}
         p_{1,i}, & \text{if added }<Q,\\
         p_{2,i}, & \text{if added }\ge Q
      \end{cases}
  \Big\}$ \;

  \tcp*[r]{(c) cut straddler and split at $d(i,i{+}1)$}
  \textsc{CutSegmentAtBoundaryAndPutTheResultInT}\((T, d(i,i{+}1))\)\;
  $(X,R) \gets \textsc{SplitAt}\big(T, d(i,i{+}1)\big)$ \;

  \tcp*[r]{(d) lazy bulk update in $X$ with $Q$ at $p_{2,i}$}
  \textsc{BulkUpdateQAtP2}$(X, Q, p_{2,i})$ \;

  \tcp*[r]{(e) cheapest below-boundary top-up within $X$ (then remove those segments)}
  $\mathsf{cand}_X \gets$ cheapest among segments in $X$ whose terminus $< d(i,i{+}1)$
  after adding just enough $p_{2,i}$ so the segment reaches $d(i,i{+}1)$; remove those segments from $X$ \;

\tcp*[r]{(f) finalize DP(i+1,0) with canonical tie-breaking, then insert/replace at zero}
  \uIf{$\mathsf{cand}_{\mathrm{tmp}} < \mathsf{cand}_X$}{
    $\DP(i{+}1,0) \gets \mathsf{cand}_{\mathrm{tmp}}$\;
  }
  \uElseIf{$\mathsf{cand}_X < \mathsf{cand}_{\mathrm{tmp}}$}{
    $\DP(i{+}1,0) \gets \mathsf{cand}_X$\;
  }
  \Else{
    \tcp{tie: prefer the one whose plan has cheaper legacy p2}
    $\DP(i{+}1,0) \gets \textsc{TieBreakByLegacyP2}(\mathsf{cand}_{\mathrm{tmp}}, \mathsf{cand}_X)$\;
  }
  \textsc{InsertOrReplaceAtZero}$(R, \DP(i{+}1,0))$ \tcp*{CRITICAL: Insert into R, not T}\;

  \tcp*[r]{(g) prune $X$ with $p_{2,i}$}
  \textsc{IndividualUpdate}$(X, 2)$\;

  \tcp*[r]{(h) isolate band, line-merge, and safely reassemble full tree}
  \textsc{CutSegmentAtBoundaryAndPutTheResultInT}$(R, Q)$\;
  \textsc{CutSegmentAtBoundaryAndPutTheResultInT}$(R, 2Q)$\;
  
  \tcp{Partition R to isolate the band without losing data outside of it}
  $(R_{<Q}, R_{\ge Q}) \gets \textsc{SplitAt}(R, Q)$\;
  $(R_{\mathrm{band}}, R_{>2Q}) \gets \textsc{SplitAt}(R_{\ge Q}, 2Q)$\;
  
  $(R'_{\mathrm{band}}, X'_{\mathrm{surviving}}) \gets \textsc{LineMergeBand}(R_{\mathrm{band}}, X)$\;
  
  \tcp{Rejoin all partitions sequentially to restore the global tree T}
  $T_{\mathrm{mid}} \gets \textsc{Join}(R'_{\mathrm{band}}, X'_{\mathrm{surviving}})$\;
  $T_{\mathrm{right}} \gets \textsc{Join}(T_{\mathrm{mid}}, R_{>2Q})$\;
  $T \gets \textsc{Join}(R_{<Q}, T_{\mathrm{right}})$ \;

  \tcp*[r]{(i) cross-boundary dominance (keep Terminus MV at the join)}
  \While{\textsc{CrossBoundaryDominates}$(T)$}{
    \textsc{RemoveDominatedAndRecomputeBoundaryAndMV}$(T)$
  }

  \tcp*[r]{(j) final prune with $p_{1,i}$}
  \textsc{IndividualUpdate}$(T,1)$ \;

  \tcp*[r]{(k) capacity + holding + demand shift}
    \textsc{AddHoldingCostsAndSubtractDemand}$(T,h(i),d(i,i{+}1))$ \;
  \textsc{Trim}$(T,B(i))$ \;
}
\end{algorithm}
\newpage

\begin{lemma}\label{lem:segment-count}
After completing station $i-1$ in Algorithm~\ref{alg:main}—equivalently, at the start of processing station $i$—the maintained $2Q$-approximate boundary solution set $S_b(i)$ contains at most $2i$ segments.
\end{lemma}

\begin{proof}
We prove by induction on $i$.

\textbf{Base ($i=1$).} At the first station there are at most two segments (one using discounted fuel $p_2$ and one using regular fuel $p_1$), so $|S_b(1)|\le 2$.

\textbf{Inductive step.} Assume $|S_b(i)|\le 2i$. We show $|S_b(i{+}1)|\le 2(i{+}1)$.

Consider iteration $i$ of Algorithm~\ref{alg:main}, which transforms $S_b(i)$ into $S_b(i{+}1)$ (after the demand/holding-cost shift in step (k)). The only events that can \emph{increase} the number of segments are:
\begin{enumerate}[label=\textbf{(\Alph*)},leftmargin=*]
\item \emph{A split at the next-station distance.} In step (c), we split the structure at $d(i,i{+}1)$. At most one segment can straddle $d(i,i{+}1)$, so cutting there creates \emph{at most one} extra segment.
\item \emph{Insertion of the zero-fuel state for the next station.} In step (f), we form $\DP(i{+}1,0)$ (implemented as a boundary state at fuel $d(i,i{+}1)$ before the final shift). This may introduce \emph{at most one} new segment (and it may be absorbed immediately if it is dominated).
\end{enumerate}

All other operations cannot increase the count:
\begin{itemize}
\item \emph{Individual updates (with $p_{1,i}$ or $p_{2,i}$)}, the \emph{line-merge in $[Q,2Q]$}, and the \emph{cross-boundary dominance loop} only remove dominated segments or extend/replace them. By the dominance lemmas (terminus dominance and same-slope dominance), when a neighbor is strictly worse at the relevant checkpoint it is dominated on its full coverage, so it is deleted; this never creates new segments.
\item \emph{Bulk update} applies lazily to the ``cannot reach'' subset; subsequent pruning within that subset again only deletes segments. No new internal boundaries are introduced besides the single possible cut at $d(i,i{+}1)$ accounted for in (A).
\item The final capacity/holding/demand operations in step (k) may delete infeasible tails or cut a single straddler at $B(i)$, but these operations do not increase the number of segments.
\end{itemize}

Therefore, from $S_b(i)$ to $S_b(i{+}1)$ the segment count increases by at most $2$ (events (A) and (B)), and may decrease due to pruning. Hence
\[
|S_b(i{+}1)| \le |S_b(i)| + 2 \le 2i + 2 = 2(i{+}1).
\]
This completes the induction.
\end{proof}

{\footnotesize
\begin{algorithm}[H]
\setstretch{0.1}
\caption{\textsc{IndividualUpdate}$(T,j)$}\label{alg:indupdate}
\DontPrintSemicolon
\KwIn{BST $T$; comparator index $j\in\{1,2\}$ with price $p_{j,i}$.}
\KwOut{$T$ with dominated adjacent segments removed under price $p_{j,i}$.}

\While{$\exists$ eligible MV in $T$ for comparator $j$ with value $\ge p_{j,i}$}{
  $(S_1,S_2,g) \gets \textsc{ArgMaxEligibleMV}(T,j)$ \tcp*{adjacent pair at boundary $g$}
  \lIf{\textsc{IsTerminusMV}$(S_1,S_2)$ \textbf{and} \textsc{BoundaryPricesEqualAt}$(g)$}{
    \textsc{RecomputeAsRegularMV}$(S_1,S_2,g)$;\ \textbf{continue}
  }
  \textsc{RemoveDominatedRightNeighborAndRe-adjustLeft}$(S_1,S_2)$ \;
  \textsc{RecomputeLocalMVs}$(\text{left neighbor of }S_1,\,S_1)$ and $(S_1,\,\text{new right neighbor of }S_1)$ \;
}
\textbf{Tie rule:} if $p_{j,i}=MV$ exactly, prefer the left segment at $g$ (canonical choice).\;
\end{algorithm}
\paragraph{Band-merge interval convention.}
In \textsc{LineMergeBand} we treat segment ranges as half-open intervals $[s,e)$.
At the unique switchpoint (where costs are equal), ties are broken in favor of the
bulk-updated family $X$ (i.e., the point belongs to $X$).
\begin{algorithm}[H]
\scriptsize 
\caption{\textsc{LineMergeBand}$(R_{\mathrm{band}}, X_{\mathrm{band}})$ \quad \emph{(Amortized $O(\log n)$ Backward Sweep)}}\label{alg:linemerge}
\DontPrintSemicolon
\KwIn{$R_{\mathrm{band}}$: BST of reaching segments strictly inside $[Q,2Q]$,
      $X_{\mathrm{band}}$: BST of bulk-updated segments strictly inside $[Q,2Q]$.}
\KwOut{$(R',X')$: updated BST fragments after merging.}

\SetKwProg{Fn}{Function}{:}{}
\Fn{\textsc{EvaluateCost}$(\mathit{Tree}, x)$}{
    $\mathit{seg} \gets \textsc{Search}(\mathit{Tree}, x)$\;
    \lIf{$\mathit{seg} \neq \mathrm{null}$}{\Return $\mathrm{Eq}_{\mathit{seg}}(x)$\;}
    \lElse{\Return $\infty$\;}
}
\Fn{\textsc{FindCrossover}$(u, X, s, e)$}{
    \tcp{Binary search down X to find exact algebraic intersection in O(log n)}
    $v \gets \mathrm{root}(X)$;\quad $y^\star \gets e$\; 
    \While{$v \neq \mathrm{null}$}{
        $[s_x, e_x) \gets \text{coverage interval of } v$\;
        $a \gets \max(s_x, s)$;\quad $b \gets \min(e_x, e)$\;
        \If{$a \ge b$}{
            \lIf{$v.\mathrm{terminus} \le s$}{$v \gets v.\mathrm{right}$\;}
            \lElse{$v \gets v.\mathrm{left}$\;}
            \textbf{continue}\;
        }
        
        $\delta_a \gets \mathrm{Eq}_u(a) - \mathrm{Eq}_v(a)$;\quad $\delta_b \gets \mathrm{Eq}_u(b) - \mathrm{Eq}_v(b)$\;
        
        \uIf{$\delta_a \ge 0$}{ $y^\star \gets a$;\quad $v \gets v.\mathrm{left}$\; }
        \uElseIf{$\delta_b < 0$}{ $v \gets v.\mathrm{right}$\; }
        \Else{
            $\alpha_R, \beta_R \gets \text{coeffs of } u$;\quad $\alpha_X, \beta_X \gets \text{coeffs of } v$\;
            \lIf{$\beta_R = \beta_X$}{\Return $a$\;}
            \lElse{\Return $\max\left(a, \min\left(b, \frac{\alpha_X - \alpha_R}{\beta_R - \beta_X}\right)\right)$\;}
        }
    }
    \Return $y^\star$\;
}

\lIf{$R_{\mathrm{band}} = \mathrm{null}$}{\Return $(\mathrm{null}, X_{\mathrm{band}})$\;}
\lIf{$X_{\mathrm{band}} = \mathrm{null}$}{\Return $(R_{\mathrm{band}}, \mathrm{null})$\;}

$y^\star \gets \mathrm{null}$\;
\While{$R_{\mathrm{band}} \neq \mathrm{null}$}{
    $(u, R_{\mathrm{rem}}) \gets \textsc{ExtractMax}(R_{\mathrm{band}})$\;
    
    $s \gets Q$\;
    \If{$R_{\mathrm{rem}} \neq \mathrm{null}$}{
        $\mathit{prev} \gets \textsc{FindMax}(R_{\mathrm{rem}})$\;
        $s \gets \max(Q, \mathit{prev}.\mathrm{terminus})$\;
    }
    $e \gets \min(2Q, u.\mathrm{terminus})$\;
    
    $\Delta_s \gets \mathrm{Eq}_u(s) - \textsc{EvaluateCost}(X_{\mathrm{band}}, s)$\;

    \uIf{$\Delta_s \ge 0$}{
        \tcp{X is cheaper at s. By monotonicity, X strictly dominates u. u is deleted.}
        $y^\star \gets s$;\quad $R_{\mathrm{band}} \gets R_{\mathrm{rem}}$\;
    }
    \Else{
        $\Delta_e \gets \mathrm{Eq}_u(e) - \textsc{EvaluateCost}(X_{\mathrm{band}}, e)$\;
        
        \uIf{$\Delta_e \le 0$}{
            \tcp{R dominates X left of e. Restore u and terminate sweep.}
            $y^\star \gets e$;\quad $R_{\mathrm{band}} \gets \textsc{Join}(R_{\mathrm{rem}}, \textsc{MakeSingleton}(u))$\;
            \textbf{break}\;
        }
        \Else{
            $y^\star \gets \textsc{FindCrossover}(u, X_{\mathrm{band}}, s, e)$\;
            $R_{\mathrm{band}} \gets \textsc{Join}(R_{\mathrm{rem}}, \textsc{MakeSingleton}(u))$\;
            \textbf{break}\;
        }
    }
}
\uIf{$y^\star \neq \mathrm{null}$}{
    \textsc{CutSegmentAtBoundary}$(R_{\mathrm{band}}, y^\star)$\;
    \textsc{CutSegmentAtBoundary}$(X_{\mathrm{band}}, y^\star)$\;

    $(R_{<}, R_{\ge}) \gets \textsc{SplitAt}(R_{\mathrm{band}}, y^\star)$\;
    $(X_{<}, X_{\ge}) \gets \textsc{SplitAt}(X_{\mathrm{band}}, y^\star)$\;

    \textsc{DeleteTree}$(R_{\ge})$;\quad \textsc{DeleteTree}$(X_{<})$\;
    \Return $(R_{<}, X_{\ge})$\;
}
\Else{
    \Return $(\mathrm{null}, X_{\mathrm{band}})$\;
}
\end{algorithm}

\subsection{Auxiliary procedures used by Algorithm~\ref{alg:main}}
We summarize the auxiliary procedures used by Algorithm~\ref{alg:main}. Each
procedure operates on the height-balanced BST of segments augmented with MV keys
(Regular or Terminus) and standard lazy tags; asymptotic costs assume $O(\log n)$
BST edits and $O(\log n)$ priority updates.

\begin{description}
  \item[\textsc{SplitAt}$(T,x)$:] Split the segment tree at the boundary
  $x=d(i,i{+}1)$ into $X$ (right endpoint $<x$) and $R$ (the rest). Time: $O(\log n)$.

  \item[\textsc{CutSegmentAtBoundaryAndPutTheResultInT}$(T,d(i,i{+}1))$:] If some segment spans $x=d(i,i{+}1)$,
  cut it there and recompute the two new boundary MVs. Time: $O(\log n)$.

  \item[\textsc{BulkUpdateQAtP2}$(X,Q,p_{2,i})$:] Lazily enable a $Q$-block at station $i$
  priced at $p_{2,i}$ for all segments in $X$. Time: $O(1)$ (lazy tag).

  \item[\textsc{IndividualUpdate}$(T,j)$:] Repeatedly remove dominated right neighbors under
  the comparator price $p_{j,i}$ ($j\in\{1,2\}$), using the maximum eligible MV at each step;
  converts Terminus$\to$Regular at equal boundary prices. Time: $O(\log n)$ per removal;
  $O(n\log n)$ total over the run.

  \item[\textsc{InsertOrReplaceAtZero}$(R,\DP(i{+}1,0))$:] Insert (or replace) the
  boundary state/segment at $x=d(i,i{+}1)$ in the \emph{reaching} tree $R$, with canonical tie rule
  (cheaper legacy $p_2$, then fixed key). Time: $O(\log n)$.

  \item[\textsc{LineMergeBand}$(R\!\mid_{[Q,2Q]},X)$:] Merge bulk-updated $X$ against the
  slice of $R$ covering $[Q,2Q]$ using \cref{lem:linemerge-band}; returns trimmed fragments to rejoin.
  Time: $O(\#\mathrm{edits}\cdot\log n)$, with $\#\mathrm{edits}=O(n)$ over the run.

  \item[\textsc{CrossBoundaryDominates}$(T)$:] At the unique $R$--$X$ join, test the correct
  MV (Terminus or Regular) against the appropriate comparator price; return true if the right
  neighbor is dominated. Time: $O(1)$.

  \item[\textsc{RemoveDominatedAndRecomputeBoundaryMV}$(T)$:] Remove the dominated neighbor
  at the join; recompute the new cross-boundary MV (Regular if boundary prices are equal,
  else Terminus). Time: $O(\log n)$.

  \item[\textsc{Join}$(R',X')$:] Join two trees with disjoint key ranges; recompute the
  unique cross-boundary MV created by the join. Time: $O(\log n)$.

  \item[\textsc{Trim}$(T,B(i))$:] Enforce capacity by removing infeasible tails and cutting
  at $B(i)$. Time: $O(\log n)$.

  \item[\textsc{AddHoldingCosts}$(T,h(i))$:] Apply end-of-period linear holding costs by a
  lazy value shift $+h(i)\cdot x$ and a uniform MV-key shift $+h(i)$
  (holding-cost MV-shift lemma). Time: $O(1)$ (lazy).
\end{description}

\subsection{Correctness of the Algorithm}\label{sec:correctness}

We prove that Algorithm~\ref{alg:main} produces, at every station $i$,
a correct $2Q$-approximate solution set $S(i)$ and the correct boundary state
$\DP(i{+}1,0)$. The proof proceeds by induction over stations and is
structured around an invariant that is preserved by each step of the algorithm.

\paragraph{Invariant $(\mathcal{I}_i)$ (coverage, optimality, and consistency).}
After completing the loop for station $i$, the BST $T$ encodes a $2Q$-approximate
solution set $S(i)$ with the following properties:
\begin{itemize}
  \item[(I1)] \textbf{Coverage and optimality on $[0,2Q]$.}
    For every inventory level $x\in[0,2Q]$, the value represented implicitly or explicitly by $T$ equals
    the true optimum $\DP(i,x)$. Each segment stores one explicit state
    and a valid linear equation to generate all covered states, and (by the
    continuity lemma in Section~\ref{sec:5}) every point covered by a segment is
    generated by its anchor without gaps for the chosen fuel type. In the segment overlapping sections the optimal value of a state is calculated by taking the appropriate minimum of the states generated by the two bounding segments.
  \item[(I2)] \textbf{No removable domination at the maintained price.}
    No adjacent pair in $T$ violates the dominance tests under $p_{1,i}$
    (Regular or Terminus, as appropriate), i.e., the left segment is not strictly
    better at the boundary checkpoint when evaluated with the correct unit price;
    equivalently, all eligible MVs are $<p_{1,i}$ after the final prune.
  \item[(I3)] \textbf{Correct MV type and value at every boundary.}
    If both sides buy at the same unit price at the boundary, the stored MV is
    Regular; otherwise it is Terminus and computed at the right segment's
    terminus checkpoint, per Section~\ref{subsec:mv}.%
\end{itemize}

\paragraph{Base case ($i=1$).}
At station $1$ there is no legacy $p_2$; the algorithm constructs the reachable
states using $p_{1,1}$ and (when applicable) $p_{2,1}$, applies
\textsc{IndividualUpdate}$(T,1)$, and forms the boundary state. The continuity
lemma guarantees that each segment covers a single interval; the dominance and
MV rules ensure (I2)–(I3). Thus $(\mathcal{I}_1)$ holds.

\paragraph{Inductive step.}
Assume $(\mathcal{I}_i)$ holds. We show that, after executing steps (a)–(k) of
Algorithm~\ref{alg:main}, $(\mathcal{I}_i)$ still holds for the constructed
$S(i)$ and the boundary state $\DP(i{+}1,0)$ is optimal. This yields
$(\mathcal{I}_{i+1})$ in the next iteration once the demand $d(i,i{+}1)$ is
subtracted (which is exactly what the boundary split/merge achieves).

\medskip
\noindent\textbf{Step (a): \textsc{IndividualUpdate}$(T,1)$.}
Each iteration compares $p_{1,i}$ to the maximum eligible MV. If the MV is
Terminus but both sides now buy at the same unit price at the boundary, the MV
is converted to Regular (no structural change). Otherwise, when $p_{1,i}$ is not
larger than the MV, the left segment is strictly cheaper at the checkpoint, and
by the dominance-from-terminus test (Section~\ref{sec:5}) it dominates the right
segment on its whole coverage; removing the dominated segment preserves (I1).
The loop ends exactly when all eligible MVs are $<p_{1,i}$, establishing (I2).

\medskip
\noindent\textbf{Step (b): forming the boundary candidate $\mathsf{cand}_{\mathrm{tmp}}$.}
By the generalized Observation~1, a state with remaining fuel $x>d(i,i{+}1)$
must not buy at station $i$, because the same or cheaper fuel is available at
station $i{+}1$ in the non-increasing price regime; hence the only optimal ways
to be exactly at $d(i,i{+}1)$ are: (i) carry over a state that already reaches
the boundary, or (ii) top up a non-reaching state at $i$. By the all-units
discount, the top-up is either $<Q$ (priced at $p_{1,i}$) or $\ge Q$
(discounted, priced at $p_{2,i}$). Taking the minimum of these two possibilities
therefore yields the optimal boundary candidate among the reaching side $R$,
proving that $\mathsf{cand}_{\mathrm{tmp}}$ is correct.

\medskip
\noindent\textbf{Step (c): split at $d(i,i{+}1)$ and cut straddlers.}
These are structural edits that do not change any state value. The explicit
boundary guarantees that subsequent tests compare exactly the intended
checkpoints (either $x=g$ for Regular MVs or $x=t$ for Terminus MVs).

\medskip
\noindent\textbf{Step (d): bulk-update $X$ at $p_{2,i}$.}
Only segments with right endpoint $<d(i,i{+}1)$ (i.e., cannot reach $i{+}1$)
are allowed to buy at station $i$, by Observation~\ref{obs:no-refuel-if-reaches-next}. Bulk-enabling a $Q$-block
at $p_{2,i}$ for all of them is logically equivalent to exposing the cheapest
available station-$i$ unit price they can obtain when purchasing $\ge Q$.
No state in $R$ is altered at this step.

\medskip
\noindent\textbf{Step (e): still-below-boundary top-ups in $X$.}
Among segments in $X$ whose terminus remains $<d(i,i{+}1)$ even after the bulk update in (d), the only way to reach the boundary is to add the exact shortfall priced at $p_{2,i}$ (the $Q$-block having been enabled). We evaluate all such candidates, keep the cheapest one at $d(i,i{+}1)$ as $\mathsf{cand}_X$, and delete the rest. This deletion is perfectly safe due to the monotonicity of parallel extensions. Because all these segments must use the exact same unit price ($p_{2,i}$) to extend to $d(i,i{+}1)$ and beyond it, their cost functions for $x \ge d(i,i{+}1)$ are strictly parallel affine lines. Among parallel lines, the segment that attains the strict minimum at $d(i,i{+}1)$ will therefore strictly dominate all other such segments for every point $x > d(i,i{+}1)$. By compressing this winning envelope into a single boundary state carrying the legacy price $p_{2,i}$, we perfectly preserve the optimal frontier while safely discarding redundant segments.
\medskip
\noindent\textbf{Step (f): finalize $\DP(i{+}1,0)$ and insert into $R$.}
By the argument in steps (b) and (e), the optimal boundary state is the minimum
of $\mathsf{cand}_{\mathrm{tmp}}$ and $\mathsf{cand}_X$; the tie rule (prefer the one with the
cheaper $p_2$, then a fixed canonical rule) preserves optimality and
ensures determinism. Inserting this single boundary point into $R$ maintains
(I1) on the reaching side.

\medskip
\noindent\textbf{Step (g): \textsc{IndividualUpdate}$(X,2)$.}
Now that purchases in $X$ can be made at $p_{2,i}$, the same dominance logic as
in step (a) applies with comparator $p_{2,i}$. Using the Regular or Terminus MV
as appropriate, any right neighbor strictly dominated at the tested price is
removed; by the dominance lemma, no optimal state is lost and (I1) is preserved
inside $X$.

\medskip
\noindent\textbf{Step (h): line-merge in $[Q,2Q]$.}
By \cref{lem:linemerge-band}, merging the bulk-updated segments against $R$'s $[Q,2Q]$ slice yields exactly two patterns:
complete takeover (once $X$ is better at some point in the band, it remains
better to the right because it uses a weakly cheaper unit price), or tail
extension (if $X$ never wins inside the band, it can only become optimal to the
right of $2Q$). The line-merge performs exactly these trims, preserving (I1).

\medskip
\noindent\textbf{Step (i): cross-boundary dominance at the $R$--$X$ join.}
At the unique join, the correct MV type is used: Terminus if the right segment
still uses legacy $p_2$ up to its terminus, else Regular at the boundary.
If the comparator price meets/exceeds the threshold, the right neighbor is
dominated by the left and can be removed (dominance lemma). The loop stops
exactly when the new boundary is non-dominated, establishing (I2) at the join.

\medskip
\noindent\textbf{Step (j): final \textsc{IndividualUpdate}$(T,1)$.}
This is identical to step (a) on the reassembled structure, ensuring that no
removable domination under $p_{1,i}$ remains anywhere in $T$ (I2), while (I1)
is preserved by the dominance lemma.

\medskip
\noindent\textbf{Capacity and holding costs.}
\textsc{Trim} removes only infeasible states (those exceeding $B(i)$) and cuts a
straddler at $B(i)$; feasibility is preserved and (I1) remains true on the
allowed domain. \textsc{AddHoldingCostsAndSubtractDemand} adds $h(i)\cdot x$ lazily to every state and subtracts the appropriate demand.
By \cref{lem:mv-holding-shift}, every MV threshold
increases by the same constant $h(i)$, so the ordering of thresholds and all
dominance decisions are unchanged; hence (I2)–(I3) remain true.

\medskip
\noindent\textbf{Conclusion of the inductive step.}
Combining the above, every step either (i) performs a purely structural edit,
(ii) removes a segment that is dominated everywhere on its coverage (hence cannot
be part of an optimal solution), or (iii) inserts the unique optimal boundary
state $\DP(i{+}1,0)$. Therefore $(\mathcal{I}_i)$ holds for the
constructed $S(i)$, and the boundary state is optimal. As $S_b(i{+}1)$ is
obtained from $S(i)$ by subtracting $d(i,i{+}1)$ from the remaining fuel, the same
arguments yield $(\mathcal{I}_{i+1})$.

\begin{theorem}\label{thm:correctness}
Assuming non-increasing unit prices over time, a single all-units discount
threshold $Q$, and linear holding costs, Algorithm~\ref{alg:main} correctly
computes, for every station $i$, a $2Q$-approximate solution set $S(i)$ and the
optimal boundary state $\DP(i{+}1,0)$. Equivalently, for every
$x\in[0,2Q]$, the value represented in $T$ equals $\DP(i,x)$, and the
state $\DP(i{+}1,0)$ is optimal.
\end{theorem}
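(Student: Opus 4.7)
\medskip
\noindent\textbf{Proof plan.}
The plan is to prove the theorem by induction on the station index $i$, carried by the invariant $(\mathcal{I}_i)$ stated in Section~\ref{sec:correctness}: (I1) on $[0,2Q]$ the BST $T$ represents $\mathrm{dp}(i,\cdot)$ at every point, either explicitly or via a segment's linear equation; (I2) no adjacent pair in $T$ is removable under the current comparator price by either a Regular or a Terminus MV test; and (I3) each boundary stores the correct MV type at the correct checkpoint. The base case $i=1$ is immediate: there is no legacy $p_2$, at most two segments are created, Lemma~3 certifies contiguity of each, and the initial \textsc{IndividualUpdate}$(T,1)$ establishes (I2)--(I3).

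For the inductive step I would walk through steps (a)--(k) of Algorithm~\ref{alg:main} and certify that each operation is one of three kinds: a purely structural edit (split, boundary cut, join, lazy-tag push), a removal of a segment that is fully dominated on its entire coverage, or the insertion of the unique boundary state $\mathrm{dp}(i{+}1,0)$. Lemma~1 lets me restrict attention to the $2Q$-window; Observation~1 justifies the split at $d(i,i{+}1)$ and the subsequent bulk update in $X$, since segments that already reach $i{+}1$ must not purchase at $i$. Lemma~2 supplies the label monotonicity used whenever a Terminus MV is evaluated; Lemma~3 guarantees that each surviving segment covers a single contiguous interval so that its stored equation regenerates exactly the right implicit states; Lemmas~4 and~5 convert a single MV-threshold passage into full dominance of the adjacent segment, which is precisely what \textsc{IndividualUpdate} exploits. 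The boundary value is handled by a two-way minimum: by Observation~1 and the all-units structure, the optimum at $d(i,i{+}1)$ is either a carry-over from $R$ or a top-up from $X$ priced at $p_{1,i}$ (if the added amount is $<Q$) or $p_{2,i}$ (if the added amount is $\ge Q$), so $\min\{\mathsf{cand}_{\mathrm{tmp}},\mathsf{cand}_X\}$ is optimal. The holding-cost MV-shift lemma then preserves every threshold ordering under the end-of-period update, closing the induction.

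The hard part will be Step (h), the line-merge in $[Q,2Q]$. The bulk-updated segments of $X$ and the restriction of $R$ to the band must be reconciled in a single sweep, and one must argue that exactly the two Lemma~6 outcomes occur: complete takeover (once the $L_2$ envelope meets the $L_1$ envelope inside the band, the non-positive slope difference forces $L_2\le L_1$ everywhere to the right) and tail extension (if $L_2$ never wins inside the band, the rightmost $L_2$ segment can only become optimal past $2Q$). The obligation is to verify Lemma~6's slope hypotheses on every overlap subinterval visited by \textsc{LineMergeBand}, including the subcase where a segment of $R$ still uses its own legacy $p_2$ on the band rather than $p_{1,i}$. A secondary subtlety is Step (i): at the unique $R$--$X$ join the stored MV type may not match the current fuel choices on both sides, so the argument is to convert Terminus to Regular (or the reverse) at the recomputed checkpoint and re-run the dominance test until (I2)--(I3) hold at the join, with termination guaranteed because each pass strictly decreases the number of boundary segments.
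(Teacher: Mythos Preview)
Your proposal is correct and follows essentially the same route as the paper's own proof: induction on $i$ via the invariant $(\mathcal{I}_i)$, a step-by-step walk through (a)--(k) classifying each operation as a structural edit, a dominance-justified removal, or the boundary-state insertion, with Lemmas~1--6 and Observation~1 invoked at exactly the places the paper uses them. Your identification of Step~(h) and the cross-boundary MV reconciliation in Step~(i) as the delicate points matches the paper's emphasis, and your explicit mention of the legacy-$p_2$ subcase on the band is a detail the paper leaves implicit.
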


\subsection{Time Complexity Analysis}\label{sec:complexity}

\paragraph{Model of computation.}
We store segments in a height-balanced BST keyed by their terminus. Standard
operations (search, insert, delete, split, join) take $O(\log n)$ time. The
data structure maintains at most $O(n)$ MV keys (Regular or Terminus); reading or updating takes $O(\log n)$ time.
Bulk and holding-cost updates are implemented lazily (constant-time tags
applied at subtree roots or global MV-key shifts).

\paragraph{Global structural bound.}
Let $m_i$ be the number of segments in $S(i)$. By \cref{lem:segment-count}, at each station $i$ we create \emph{at most} two new
segments (a split at $d(i,i{+}1)$ and the explicit zero-inventory state for
station $i{+}1$); every dominated segment is removed permanently. Hence
$m_i\le 2i$ and the total number of segment insertions and deletions across
the entire run is $O(n)$. The same $O(n)$ bound holds for the number of MV
keys that are ever created and subsequently removed.

\paragraph{Per-station operations accounting.}
For a fixed station $i$, we analyze each step of Algorithm~\ref{alg:main}.

\begin{itemize}
\item \textbf{(a) \textsc{IndividualUpdate}$(T,1)$.}
Each loop iteration removes one dominated segment or converts one Terminus MV
to a Regular MV and continues. Across \emph{all} stations, there are at most
$O(n)$ removals (each segment is removed once) and at most $O(n)$ conversions of MV values.
Each action costs $O(\log n)$
to update the BST, so the total over the run is $O(n\log n)$; per
station this contributes $O(\log n)$ amortized.

\item \textbf{(b) Boundary candidate $\mathsf{cand}_{\mathrm{tmp}}$.}
A search at $x=d(i,i{+}1)$ returns the at-most-two bounding segments; cost
evaluations are $O(1)$. Total $O(\log n)$ per station.

\item \textbf{(c) \textsc{CutSegmentAtBoundaryAndPutTheResultInT} + \textsc{SplitAt}.}
Cutting a single segment (if any) and splitting the tree by the boundary
are both $O(\log n)$.

\item \textbf{(d) \textsc{BulkUpdateQAtP2}$(X,\cdot)$.}
Lazy annotation on the subtree root is $O(1)$; pushing cost is charged to
subsequent searches/updates.

\item \noindent\textbf{Step (e): still-below-boundary top-ups in $X$.}
Among segments in $X$ whose terminus remains $<d(i,i{+}1)$ even after the bulk update in (d), the only way to reach the boundary is to add the exact shortfall priced at $p_{2,i}$ (the $Q$-block having been enabled). We evaluate all such candidates, keep the cheapest one at $d(i,i{+}1)$ as $\mathsf{cand}_X$, and delete the rest. This deletion is perfectly safe due to the monotonicity of parallel extensions. Because all these segments must use the exact same unit price ($p_{2,i}$) to extend to $d(i,i{+}1)$ and beyond it, their cost functions for $x \ge d(i,i{+}1)$ are strictly parallel affine lines. Among parallel lines, the segment that attains the strict minimum at $d(i,i{+}1)$ will therefore strictly dominate all other such segments for every point $x > d(i,i{+}1)$. By compressing this winning envelope into a single boundary state carrying the legacy price $p_{2,i}$, we perfectly preserve the optimal frontier while safely discarding redundant segments.

\item \textbf{(f) Insert $\DP(i{+}1,0)$ into $R$.}
One insertion/replacement at the boundary in the tree $R$ costs
$O(\log n)$.

\item \textbf{(g) \textsc{IndividualUpdate}$(X,2)$.}
After the bulk update, internal adjacencies in $X$ are eligible for pruning
at $p_{2,i}$. Exactly as in step (a), over the whole run there are $O(n)$
removals/conversions; total $O(n\log n)$ and $O(\log n)$ amortized per
station.

\item \textbf{(h) \textsc{LineMergeBand}$(R\!\mid_{[Q,2Q]},X)$.}
The merge procedure sweeps backwards (right-to-left) through the segments of $R$. Because the cost difference $\Delta(x)$ is monotonic (Lemma \ref{lem:linemerge-band}), any segment $u \in R$ where $X$ is cheaper at the left boundary is entirely dominated by $X$. We permanently delete $u$ via an $O(\log n)$ tree update. The sweep terminates immediately upon finding the single crossover point or a segment where $R$ is optimal. Because every non-terminating loop iteration permanently deletes a segment, the total traversal work across all merges is bounded by the $O(n)$ maximum segments created, yielding an amortized $O(\log n)$ cost per station.

\item \textbf{(i) Cross-boundary dominance loop.}
At the unique $R$--$X$ join, the loop repeatedly removes a dominated neighbor
and recomputes the new cross-boundary MV (Regular if boundary prices become
equal, else Terminus). As each iteration deletes a segment, there can be at
most $O(n)$ iterations in total; each iteration is $O(\log n)$. Thus
$O(n\log n)$ overall and $O(\log n)$ amortized per station.

\item \textbf{(j) Final \textsc{IndividualUpdate}$(T,1)$.}
Same accounting as (a): $O(n\log n)$ total, $O(\log n)$ amortized.

\item \textbf{Capacity/holding updates.}
\textsc{Trim}$(T,B(i))$ performs at most one cut (and possibly deletes a tail
of segments); across the run these deletions are covered by the global $O(n)$
removal bound, so $O(\log n)$ amortized per station. \textsc{AddHoldingCostsAndSubtractDemand}
$(T,h(i),d(i,i{+}1))$ is a lazy value/MV-key shift and is $O(1)$ per station.
\end{itemize}

\paragraph{Putting it together.}
Every station performs a constant number of $O(\log n)$ structural operations
(split/join/insert) and a constant number of dominance passes whose
\emph{total} number of removals/conversions across the run is $O(n)$. Since
each removal/conversion/re-adjustment costs $O(\log n)$, the total cost over $n$ stations
is $O(n\log n)$.

\begin{theorem}\label{thm:complexity}
Under the segment bound $m_i\le 2i$ and with augmented balanced-tree primitives, the
per-station construction of $S(i)$ from $S_b(i)$ runs in total time
$O(n\log n)$ and uses $O(n)$ space.
\end{theorem}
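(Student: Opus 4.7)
The plan is to prove the two claims of the theorem separately and then combine them: first the global $O(n)$ bound on the number of segments ever created, then the per-operation $O(\log n)$ bound from the BST primitives, and finally an amortized argument that converts the possibly unbounded inner loops (individual updates, line merge, cross-boundary domination) into a global $O(n\log n)$ budget. Space will follow immediately from the segment-count lemma.

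\textbf{Setup and the global structural budget.}
First I would fix the model of computation: segments are kept in a height-balanced BST keyed by terminus, each carrying an MV key and lazy tags, so that \textsc{SplitAt}, \textsc{Join}, insert, delete, and single-key recomputations all cost $O(\log n)$, while \textsc{BulkUpdateQAtP2} and \textsc{AddHoldingCostsAndSubtractDemand} are $O(1)$ lazy annotations (pushes are charged to later searches). Invoking Lemma~7, at most two new segments are created per station (the straddler cut in step~(c) and the boundary state in step~(f)), and dominated segments are removed \emph{permanently}. Therefore, over the entire run the total number of segment insertions is at most $2n$, and since deletions are bounded by insertions, the total number of deletions is also $O(n)$. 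Analogously, every MV key is born with a segment and dies with it, so at most $O(n)$ MV creations/updates/removals occur in total. Space is then $O(n)$ because $|T|\le 2n$ at all times.

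\textbf{Amortized accounting of the dominance loops.}
The heart of the proof is accounting for the \emph{while} loops: \textsc{IndividualUpdate} in steps~(a),(g),(j), the cross-boundary loop in~(i), and \textsc{LineMergeBand} in~(h). The key observation is that each iteration of any of these loops either deletes one segment, shrinks one segment in the band, or converts one Terminus MV into a Regular MV. Each such event happens at most $O(n)$ times globally: deletions are bounded by the insertion count argued above, shrinks in \textsc{LineMergeBand} are bounded because a segment can be trimmed at most once in its lifetime (after a successful line-merge against a bulk-updated competitor, the trimmed tail is gone forever), and Terminus-to-Regular conversions are bounded because each new Terminus MV arises only when a fresh adjacency is created, which itself is an $O(n)$ event. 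Each such unit of work triggers $O(\log n)$ BST maintenance (local MV recomputations for the two neighbors, deletion, or boundary split), so summing over all stations the inner loops contribute at most $O(n\log n)$.

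\textbf{Assembling the per-station cost.}
The remaining per-station operations are \emph{non-amortized} and each is a constant number of BST primitives: step~(b) is one point query, (c) is a split plus at most one cut, (d) and the holding-cost shift are $O(1)$ lazy tags, (e) touches the $s_i$ newly-below-boundary segments but $\sum_i s_i = O(n)$ again by the global deletion bound, (f) is one insert/replace, and \textsc{Trim} in~(k) performs at most one cut plus tail deletions charged to the global $O(n)$ deletion budget. This gives $O(\log n)$ worst-case work per station outside the amortized loops, hence $O(n\log n)$ over $n$ stations, which added to the $O(n\log n)$ amortized inner-loop cost yields the claimed total. The main obstacle, and the step I would write most carefully, is making the $O(n)$ shrink bound inside \textsc{LineMergeBand} precise: one must argue that each trimming event in the $[Q,2Q]$ band either removes an $R$-fragment (charged to deletions) or advances the $l_2$ cursor past a point where $X$ lost (charged to a future deletion of that $X$-piece), so that no segment is shrunk unboundedly many times across the run.
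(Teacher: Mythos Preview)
Your proposal is correct and follows essentially the same approach as the paper: establish the global $O(n)$ segment-creation budget from Lemma~7, cost each BST primitive at $O(\log n)$, and then amortize the inner loops (\textsc{IndividualUpdate}, \textsc{LineMergeBand}, cross-boundary) by charging each iteration to a permanent deletion, a one-time shrink, or a Terminus$\to$Regular conversion, all of which occur $O(n)$ times in total. Your flagging of the \textsc{LineMergeBand} shrink count as the step requiring the most care is apt; the paper handles it with the same one-line assertion (``each edited fragment is trimmed at most once in the run'') that your final paragraph unpacks.
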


\subsection{An extension for $B(i)>2Q$}
This add-on is used \emph{only} to answer $\DP(i,x)$ for $x\le B(i)$ using purchases
at station $i$; it does \emph{not} modify how we build $S_b(i{+}1)$. It is executed
\emph{after} the per-station routine has finished (i.e., after the final prune with $p_{1,i}$)
and is fully reversed before processing station $i{+}1$.

\paragraph{Transient in-place expansion (no line-merge).}
Let each surviving segment $S$ have anchor $\DP(i,f)$, legacy discounted price $p(i,f)$
with residual $r(i,f)$, and terminus $t_S:=f+r(i,f)$. Proceed as follows:
\begin{enumerate}
  \item \textbf{Select eligible segments.}
        Mark the set
        \[
          X^{\mathrm{cap}} \ :=\ \{\, S\in T \;:\; t_S + Q \;<\; B(i) \,\},
        \]
        i.e., exactly those segments that can legally accept a full $Q$-block at price $p_{2,i}$
        without exceeding capacity.
  \item \textbf{Apply a lazy bulk tag (in place).}
        On $X^{\mathrm{cap}}$ only, apply a transient lazy tag that grants a $Q$-block at
        unit price $p_{2,i}$ (all-units discount at station $i$). No line-merge is performed.
  \item \textbf{Local pruning inside $X^{\mathrm{cap}}$ and global reconnection.}
        Run \textsc{IndividualUpdate} limited to \emph{internal} adjacencies of
        $X^{\mathrm{cap}}$ with comparator $p_{2,i}$. This improves query speed but must not
        remove or alter segments outside $X^{\mathrm{cap}}$ and must not use cross-boundary
        comparisons; the base frontier remains intact.
  \item \textbf{Answer queries at station $i$.}
        To evaluate $\DP(i,x)$ for any $x\le B(i)$, perform the usual $O(\log n)$ search
        for the at-most-two bounding segments and compute their costs at $x$ \emph{with} the
        lazy tag in effect for any segment in $X^{\mathrm{cap}}$. Take the minimum.
  \item \textbf{Reverse before $i{+}1$.}
        Remove the lazy tag from $X^{\mathrm{cap}}$ and discard any per-query caches so that
        $T$ returns exactly to its post-step-(j) state. This prepares the tree for station
        $i{+}1$ and ensures the add-on is never used to construct $S_b(i{+}1)$.
\end{enumerate}

\paragraph{Correctness and cost.}
The transient tag is never used to form $S_b(i{+}1)$, so Observation~\ref{obs:no-refuel-if-reaches-next} is respected.
All comparisons during queries are pointwise and use the same piecewise-affine equations as
in the core method; non-increasing prices guarantee that enabling $p_{2,i}$ is consistent
with the dominance logic used inside $X^{\mathrm{cap}}$. The add-on does not change the global
$O(n\log n)$ bound; tagging and untagging are $O(1)$, and each query remains $O(\log n)$.

\subsection{Keeping the optimal decisions}
Keeping the optimal decisions can be done in two ways:
\begin{enumerate}
  \item \textbf{Per-station operation log.} At each station, record every operation on the tree (individual updates, bulk updates) together with information about the \emph{affected} elements. With this record, one can retrace every step for a given \emph{solution} by \emph{reconstructing the tree backward}. This takes $O(n\log n)$ time.
  \item \textbf{Lazy branch tracking.} A more sophisticated approach lazily records, for each branch of the augmented tree $T$, the sequence of operations applied to that branch. This requires more careful pointer management to preserve operation order. When a sequence is pushed from a node to its children, keep in each child’s list a pointer to a compact single-element proxy that references the original sequence; this lets the children’s lists \emph{share} the parent’s sequence without copying.
\end{enumerate}

\section{Conclusion}
We presented an $O(n \log n)$ algorithm for the single-item lot sizing problem with a 1-breakpoint all-units quantity discount in the case of non-increasing prices.
Our algorithm works identically for integer and non-integer quantities.

Our algorithm is an improvement over the previous $O(n^2)$ state-of-art algorithm for the same problem.
Furthermore, we developed a rather general dynamic programming technique that may be applicable to similar problems in lot sizing or elsewhere.
Finally an interesting avenue of research would be to investigate broad classes of cost and holding functions that preserve the validity of the lemmas of our algorithm.

\bibliographystyle{plain}
\bibliography{references}

\section{Appendix}
We will first define the operations required by our algorithm and then explain how they can be used to construct the functions in the pseudocode provided in the previous section and provide the time complexity for each operation.
For these operations we use an augmented BST, where each of its internal nodes keeps additional information about the maximum MV values of their children.

A state segment is a 7-tuple containing the leftmost state (position and value) of the segment, its terminus position, its MV and its type (regular or terminus), its last used $p_2$ and the additional amount of that fuel it can use after its terminus point.
For deletions and insertions, what matters for a state is the terminus point of a segment and the MV value; the rest of the information is just carried with it.
The data structure needs for every operation to also apply each lazy tag on the visited nodes and then push it downward to its children.
A data structure that supports the following basic operations in $O(\log n)$ time is needed in order to implement the functions of our main algorithm:

\begin{description}
    \item[Insert$(x,w,v)$:] Inserts a segment (whose rightmost position/terminus is $x$) at position $x$ along a one-dimensional axis. The element is given an MV of $w$ and a value $v$. Its position is its terminus point.
    \item[Remove$(e)$:] Given a pointer $e$, it deletes the element $e$ from the structure.

    \item[Search$(x)$:] Identifies the optimal segment $t$ that covers the point $x$. The operation returns a reference to all the segments (2 at most) that overlap with $x$ on the one-dimensional axis. This is done by searching for the element with the lowest terminus higher than $x$ and the element with the highest terminus less than $x$. We can find the optimal segment by generating the state using each of the two bounding segments and taking the minimum.
    \item[ChangeMV$(w,e)$:] Changes the MV to $w$ of a specified segment $e$, identified by a reference $e$ to the element.
    \item[FindMax:] Locates the segment with the highest MV among all elements in the structure.
    \item[RemoveMax:] Deletes the segment with the highest MV from the structure and reconfigures the position and MV of its adjacent segments. Subsequently it must update the maximum MV at each node in the path from that segment to the root starting from the lowest node.
    \item[IncreaseMV$(x,d)$:] Increases the MV of all elements from the element with the lowest position in the structure up to, but not including, the one that starts at or after position $x$. Each of these elements receives an increment of $d$ units in MV.
    \item[IncreasePosition$(x,d)$:] Increases the position by $d$ of all elements starting from the element with the lowest position in the structure up to, but not including, the one that starts at or after position $x$. If the segment starts before $x$ but ends after it we cut it at that position. Each of these elements receives an increment of $d$ units in position value.
    \item[IncreaseValue$(x,d)$:] Increases the value by $d$ of all segments from the segment with the lowest position in the structure up to, but not including, the one that starts at or after position $x$. Each of these elements receives an increment of $d$ units in value.
    \item[ChangePrice$(x,a)$:] Changes the price of $p_2$ by $a$ of all segments from the segment with the lowest position in the structure up to, but not including, the one that starts at or after position $x$. Each of these elements has the price of its cheapest fuel set to $a$.

    \item[Split$(x)$:] Cuts the tree into two trees $L$ and $R$, with $L$ containing all the elements with positions lower or equal to $x$ and $R$ containing the rest of the elements.
    \item[Join$(L,R)$:] Merges the trees $L$ and $R$ into a balanced binary search tree given that all the elements of one of the trees are at higher positions than the elements of the other tree.
\end{description}

\subsection{Data Structure Implementation}
The operations described above are implemented using an \textbf{augmented balanced binary search tree} (such as an AVL tree or Red-Black tree) where nodes are ordered by their terminus positions.

\subsubsection{Position Coordinate System}
We use an absolute coordinate system for inventory positions:
\begin{itemize}
\item Position 0 corresponds to zero inventory at the start of station 1
\item For any station $i$, the position at zero inventory is $\sum_{j=1}^{i-1} d_j$ (cumulative distance)
\item We maintain a prefix sum array: $D[i] = d(1,i-1)= \sum_{j=1}^{i-1} d_j$ for efficient position calculations
\item A state with inventory $f$ at station $i$ has absolute position $D[i] + f$
\item We can easily convert the absolute position system to a relative one which starts with zero at any station $i$ by subtracting from all states the cumulative sum of distances up to that station.
\end{itemize}

This absolute positioning system allows us to:
\begin{itemize}
\item Compare positions across different stations uniformly
\item Efficiently determine which segments can reach the next period
\item Apply bulk updates by shifting absolute positions
\end{itemize}

\subsubsection{Node Structure}
Each node in the BST stores:
\begin{enumerate}
\item \textbf{Primary data:} The complete 7-tuple segment representation
\item \textbf{Auxiliary data for efficiency:}
\begin{itemize}
\item \texttt{max\_MV}: Maximum MV value in this node's subtree
\item \texttt{max\_MV\_node}: Pointer to the node with maximum MV in subtree
\item \texttt{has\_terminus\_MV}: Boolean indicating if subtree contains terminus-type MVs
\end{itemize}
\item \textbf{Lazy propagation fields:}
\begin{itemize}
\item \texttt{position\_delta}: Pending position increase for entire subtree
\item \texttt{value\_delta}: Pending $\DP$ value increase for entire subtree
\item \texttt{MV\_delta}: Pending MV increase for entire subtree (for holding costs)
\item \texttt{p2\_price}: Pending $p_2$ fuel price accessible for the segments of the entire subtree
\end{itemize}
\end{enumerate}

\subsubsection{Derived Attributes}
Some segment attributes are computed dynamically as needed rather than stored:
\begin{itemize}
\item Once the latest $p_2$ price is pushed to a segment then the actual $p(i,f)$ price is changed and stored in the segment tuple.
Some attributes of a segment can be derived from the index of the station where it last received a bulk-update.
\item The remaining capacity $r(i,f)$ is computed as:
\[
r(i,f) = \min \left( \text{segment.r\_fuel},\, \min_{i' \le k \le i} \{ B(k) - d(k, i) \} - f \right)
\]
where $i'$ is the period where the segment last received a bulk update, and $f$ is the remaining fuel of the explicit state at station $i$.
\item The current optimal fuel price for a segment considers both its stored $p_2$ price and the current period's $p_{1,i}$
\end{itemize}

\subsubsection{Maintaining Auxiliary Information}
After each modification (insertion, deletion, or lazy update), we update auxiliary fields bottom-up:
\begin{itemize}
\item \texttt{max\_MV} = $\max$(node.MV, left.max\_MV, right.max\_MV)
\item \texttt{max\_MV\_node} points to the node achieving this maximum
\item \texttt{has\_terminus\_MV} = node.is\_terminus
\end{itemize}

This augmentation enables $O(\log n)$ access to the maximum MV and $O(\log n)$ updates while maintaining the tree's balance properties.

Since these lazy updates always affect the entirety of the affected tree (either because they are applied to a subtree cut from the main tree or because they are applied as a holding cost across the whole tree), they are always inserted at the root. For a discussion of a more general version of lazy updates, see \textit{Purely Functional Data Structures} by Chris Okasaki.

The operations split and join are by far the hardest from the operations of the augmented tree structure. These operations are designed to accommodate bulk updates. The states affected by a bulk update are first split into a separate tree on which the update is applied lazily. After the update, the resulting tree is reconnected to the remainder of the tree, and the MV value between the last non-bulk-updated element and the first bulk-updated element is recalculated.
We include a slightly modified version of a standard join operation of balanced trees below:
\subsection{Split and Join in $O(\log n)$ time (AVL, parent pointers, strict lazy)}\label{app:split-join}

The operations \texttt{Split} and \texttt{Join} are the key structural edits used by the main algorithm to isolate the ``cannot-reach'' prefix, apply lazy bulk tags to it, and then reconnect it. For the overall $O(n\log n)$ time bound to hold, we need \texttt{Split} and \texttt{Join} to run in strict $O(\log n)$ time while preserving the AVL balance invariants.

Standard AVL rotations only shift heights by exactly 1, making them incapable of resolving the massive height discrepancies generated when a tree is split. Therefore, our \texttt{Split} algorithm avoids local rotations entirely. Instead, it systematically dismantles the tree along the search path and reconstructs it using \texttt{JoinWithPivot}, which dynamically bridges arbitrary height differences by climbing down the spine of the taller subtree.

\paragraph{Conventions and invariants.}
Each AVL node stores: \texttt{key} (segment terminus), pointers \texttt{left}, \texttt{right}, \texttt{parent}, cached \texttt{height}, all augmented fields (e.g., \texttt{MV}, \texttt{max\_MV}, \texttt{max\_MV\_node}), and a lazy tag \texttt{lazy}. We enforce:
\begin{itemize}
\item \texttt{height}$(\varnothing)=-1$ (so a leaf has height $0$).
\item \texttt{balance}$(u)=\texttt{height}(u.\text{left})-\texttt{height}(u.\text{right})$.
\item \texttt{PushDown}$(u)$ evaluates in $O(1)$ time and must be called before reading children or descending.
\item \texttt{Update}$(u)$ recomputes \texttt{height} and \emph{all} augmented fields of $u$ strictly from its children in $O(1)$ time.
\end{itemize}

\begingroup
\SetAlCapFnt{\footnotesize}
\SetAlCapNameFnt{\footnotesize}
\SetAlFnt{\footnotesize}
\SetAlgoNlRelativeSize{-1}

\paragraph{Rotations (strict lazy, parent pointers).}
The rotations reconnect the rotated subtree to its parent via parent pointers, and correctly update augmented fields from the bottom up.

\begin{algorithm}[H]
\caption{Utility: \texttt{RotateLeft}$(u)$ \quad (AVL, strict lazy, parent pointers)}\label{alg:rot-left}
\DontPrintSemicolon
\KwIn{Node $u$ with $u.\text{right}\neq\varnothing$}
\KwOut{New local root $y$ of the rotated subtree}
\SetKwProg{Fn}{Function}{:}{}
\Fn{\texttt{RotateLeft}$(u)$}{
  \texttt{PushDown}$(u)$\;
  $y \gets u.\text{right}$;\quad \texttt{PushDown}$(y)$\;
  $\beta \gets y.\text{left}$;\quad $p \gets u.\text{parent}$\;

  $y.\text{parent} \gets p$\;
  \uIf{$p\neq\varnothing$}{
    \uIf{$p.\text{left}=u$}{$p.\text{left}\gets y$\;}\Else{$p.\text{right}\gets y$\;}
  }

  $y.\text{left}\gets u$;\quad $u.\text{parent}\gets y$\;
  $u.\text{right}\gets \beta$\;
  \If{$\beta\neq\varnothing$}{$\beta.\text{parent}\gets u$\;}

  \texttt{Update}$(u)$;\ \texttt{Update}$(y)$\;
  \Return $y$\;
}
\end{algorithm}

\begin{algorithm}[H]
\caption{Utility: \texttt{RotateRight}$(u)$ \quad (AVL, strict lazy, parent pointers)}\label{alg:rot-right}
\DontPrintSemicolon
\KwIn{Node $u$ with $u.\text{left}\neq\varnothing$}
\KwOut{New local root $y$ of the rotated subtree}
\SetKwProg{Fn}{Function}{:}{}
\Fn{\texttt{RotateRight}$(u)$}{
  \texttt{PushDown}$(u)$\;
  $y \gets u.\text{left}$;\quad \texttt{PushDown}$(y)$\;
  $\beta \gets y.\text{right}$;\quad $p \gets u.\text{parent}$\;

  $y.\text{parent} \gets p$\;
  \uIf{$p\neq\varnothing$}{
    \uIf{$p.\text{left}=u$}{$p.\text{left}\gets y$\;}\Else{$p.\text{right}\gets y$\;}
  }

  $y.\text{right}\gets u$;\quad $u.\text{parent}\gets y$\;
  $u.\text{left}\gets \beta$\;
  \If{$\beta\neq\varnothing$}{$\beta.\text{parent}\gets u$\;}

  \texttt{Update}$(u)$;\ \texttt{Update}$(y)$\;
  \Return $y$\;
}
\end{algorithm}

\begin{algorithm}[H]
\caption{Utility: \texttt{RebalanceUp}$(u)$ \quad (standard AVL maintenance)}\label{alg:rebalance-up}
\DontPrintSemicolon
\KwIn{Node pointer $u$ (possibly $\varnothing$)}
\KwOut{Root of the rebalanced tree}
\SetKwProg{Fn}{Function}{:}{}
\Fn{\texttt{RebalanceUp}$(u)$}{
  \If{$u=\varnothing$}{\Return $\varnothing$\;}
  $w \gets u$\;
  \While{$w\neq\varnothing$}{
    \texttt{PushDown}$(w)$;\ \texttt{Update}$(w)$\;
    $b \gets \texttt{balance}(w)$\;

    \uIf{$b=2$}{
      \texttt{PushDown}$(w.\text{left})$;\ \texttt{Update}$(w.\text{left})$\;
      \If{$\texttt{balance}(w.\text{left})<0$}{
        \texttt{RotateLeft}$(w.\text{left})$\;
      }
      $w \gets \texttt{RotateRight}(w)$\;
    }
    \uElseIf{$b=-2$}{
      \texttt{PushDown}$(w.\text{right})$;\ \texttt{Update}$(w.\text{right})$\;
      \If{$\texttt{balance}(w.\text{right})>0$}{
        \texttt{RotateRight}$(w.\text{right})$\;
      }
      $w \gets \texttt{RotateLeft}(w)$\;
    }

    \texttt{Update}$(w)$\;
    \uIf{$w.\text{parent} = \varnothing$}{ \Return $w$\; }
    $w \gets w.\text{parent}$\;
  }
}
\end{algorithm}

\paragraph{Extracting extrema (returns the removed node as a singleton).}
These utilities isolate a pivot node required for concatenating trees of arbitrary heights.

\begin{algorithm}[H]
\caption{Utility: \texttt{MakeSingleton}$(x)$}\label{alg:singleton}
\DontPrintSemicolon
\KwIn{Node pointer $x$}
\KwOut{The same node $x$, completely isolated as a one-node AVL tree}
\SetKwProg{Fn}{Function}{:}{}
\Fn{\texttt{MakeSingleton}$(x)$}{
  $x.\text{left}\gets\varnothing$;\quad $x.\text{right}\gets\varnothing$;\quad $x.\text{parent}\gets\varnothing$\;
  $x.\text{lazy}\gets I$ \tcp*{Identity lazy tag}
  \texttt{Update}$(x)$\;
  \Return $x$\;
}
\end{algorithm}

\begin{algorithm}[H]
\caption{Utility: \texttt{ExtractMax}$(T)$ \quad (AVL, strict lazy)}\label{alg:extract-max}
\DontPrintSemicolon
\KwIn{AVL root $T$}
\KwOut{Pair $(x,T')$ where $x$ is the max node (singleton) and $T'$ is the remaining tree}
\SetKwProg{Fn}{Function}{:}{}
\Fn{\texttt{ExtractMax}$(T)$}{
  \If{$T=\varnothing$}{\Return $(\varnothing,\varnothing)$\;}
  $u \gets T$\;
  \While{$u.\text{right}\neq\varnothing$}{
    \texttt{PushDown}$(u)$;\ $u \gets u.\text{right}$\;
  }
  \texttt{PushDown}$(u)$;\ \texttt{Update}$(u)$\;

  $x \gets u$;\quad $p \gets x.\text{parent}$;\quad $c \gets x.\text{left}$ \tcp*{$x.\text{right}$ is $\varnothing$}
  \If{$c\neq\varnothing$}{$c.\text{parent}\gets p$\;}

  \uIf{$p=\varnothing$}{
    $T' \gets c$;\ \If{$T'\neq\varnothing$}{$T'.\text{parent}\gets\varnothing$\;}
  }
  \Else{
    \uIf{$p.\text{left}=x$}{$p.\text{left}\gets c$\;}\Else{$p.\text{right}\gets c$\;}
    $T' \gets \texttt{RebalanceUp}(p)$\;
  }
  $x \gets \texttt{MakeSingleton}(x)$\;
  \Return $(x,T')$\;
}
\end{algorithm}

\begin{algorithm}[H]
\caption{Utility: \texttt{ExtractMin}$(T)$ \quad (AVL, strict lazy)}\label{alg:extract-min}
\DontPrintSemicolon
\KwIn{AVL root $T$}
\KwOut{Pair $(x,T')$ where $x$ is the min node (singleton) and $T'$ is the remaining tree}
\SetKwProg{Fn}{Function}{:}{}
\Fn{\texttt{ExtractMin}$(T)$}{
  \If{$T=\varnothing$}{\Return $(\varnothing,\varnothing)$\;}
  $u \gets T$\;
  \While{$u.\text{left}\neq\varnothing$}{
    \texttt{PushDown}$(u)$;\ $u \gets u.\text{left}$\;
  }
  \texttt{PushDown}$(u)$;\ \texttt{Update}$(u)$\;

  $x \gets u$;\quad $p \gets x.\text{parent}$;\quad $c \gets x.\text{right}$ \tcp*{$x.\text{left}$ is $\varnothing$}
  \If{$c\neq\varnothing$}{$c.\text{parent}\gets p$\;}

  \uIf{$p=\varnothing$}{
    $T' \gets c$;\ \If{$T'\neq\varnothing$}{$T'.\text{parent}\gets\varnothing$\;}
  }
  \Else{
    \uIf{$p.\text{left}=x$}{$p.\text{left}\gets c$\;}\Else{$p.\text{right}\gets c$\;}
    $T' \gets \texttt{RebalanceUp}(p)$\;
  }
  $x \gets \texttt{MakeSingleton}(x)$\;
  \Return $(x,T')$\;
}
\end{algorithm}
\begin{algorithm}[H]
\caption{\texttt{JoinWithPivot}$(T_1,x,T_2)$ \quad (Safely joins trees of arbitrary heights)}\label{alg:join-with-pivot}
\DontPrintSemicolon
\KwIn{AVL roots $T_1,T_2$; singleton node $x$ bridging their key domains}
\KwOut{AVL root of $T_1 \cup \{x\} \cup T_2$}

\SetKwProg{Fn}{Function}{:}{}
\Fn{\texttt{JoinWithPivot}$(T_1, x, T_2)$}{
  \tcp{Safely obtain heights (treating empty trees as height -1)}
  $h_1 \gets -1$;\quad $h_2 \gets -1$\;
  \If{$T_1\neq\varnothing$}{\texttt{PushDown}$(T_1)$;\ \texttt{Update}$(T_1)$;\ $h_1 \gets \text{height}(T_1)$\;}
  \If{$T_2\neq\varnothing$}{\texttt{PushDown}$(T_2)$;\ \texttt{Update}$(T_2)$;\ $h_2 \gets \text{height}(T_2)$\;}

  \BlankLine
  \uIf{$|h_1-h_2|\le 1$}{
    $x.\text{left}\gets T_1$;\ \If{$T_1\neq\varnothing$}{$T_1.\text{parent}\gets x$\;}
    $x.\text{right}\gets T_2$;\ \If{$T_2\neq\varnothing$}{$T_2.\text{parent}\gets x$\;}
    $x.\text{parent}\gets\varnothing$\;
    \texttt{Update}$(x)$;\ \Return $x$\;
  }
  \uElseIf{$h_1 > h_2 + 1$}{
    $u \gets T_1$\;
    \While{\textbf{true}}{
      \texttt{PushDown}$(u)$;\ \texttt{Update}$(u)$\;
      $v \gets u.\text{right}$\;
      $h_v \gets -1$\;
      \If{$v\neq\varnothing$}{\texttt{PushDown}$(v)$;\ \texttt{Update}$(v)$;\ $h_v \gets \text{height}(v)$\;}
      \If{$h_v \le h_2+1$}{\textbf{break}\;}
      $u \gets v$\;
    }
    $v \gets u.\text{right}$ \tcp*{may be $\varnothing$}
    $u.\text{right}\gets x$;\ $x.\text{parent}\gets u$\;
    $x.\text{left}\gets v$;\ \If{$v\neq\varnothing$}{$v.\text{parent}\gets x$\;}
    $x.\text{right}\gets T_2$;\ \If{$T_2\neq\varnothing$}{$T_2.\text{parent}\gets x$\;}
    \texttt{Update}$(x)$;\ \texttt{Update}$(u)$\;
    \Return \texttt{RebalanceUp}(u)\;
  }
  \Else{
    $u \gets T_2$\;
    \While{\textbf{true}}{
      \texttt{PushDown}$(u)$;\ \texttt{Update}$(u)$\;
      $v \gets u.\text{left}$\;
      $h_v \gets -1$\;
      \If{$v\neq\varnothing$}{\texttt{PushDown}$(v)$;\ \texttt{Update}$(v)$;\ $h_v \gets \text{height}(v)$\;}
      \If{$h_v \le h_1+1$}{\textbf{break}\;}
      $u \gets v$\;
    }
    $v \gets u.\text{left}$ \tcp*{may be $\varnothing$}
    $u.\text{left}\gets x$;\ $x.\text{parent}\gets u$\;
    $x.\text{right}\gets v$;\ \If{$v\neq\varnothing$}{$v.\text{parent}\gets x$\;}
    $x.\text{left}\gets T_1$;\ \If{$T_1\neq\varnothing$}{$T_1.\text{parent}\gets x$\;}
    \texttt{Update}$(x)$;\ \texttt{Update}$(u)$\;
    \Return \texttt{RebalanceUp}(u)\;
  }
}
\end{algorithm}

\begin{algorithm}[H]
\caption{\texttt{Join}$(T_1,T_2)$ \quad (Requires all keys in $T_1 <$ all keys in $T_2$)}\label{alg:join-fixed}
\DontPrintSemicolon
\KwIn{AVL roots $T_1,T_2$ with $\max(T_1)<\min(T_2)$}
\KwOut{AVL root containing $T_1\cup T_2$}
\SetKwProg{Fn}{Function}{:}{}
\Fn{\texttt{Join}$(T_1, T_2)$}{
  \If{$T_1=\varnothing$}{\Return $T_2$\;}
  \If{$T_2=\varnothing$}{\Return $T_1$\;}

  \texttt{PushDown}$(T_1)$;\ \texttt{Update}$(T_1)$;\quad \texttt{PushDown}$(T_2)$;\ \texttt{Update}$(T_2)$\;

  \eIf{$\text{height}(T_1)\ge \text{height}(T_2)$}{
    $(x,T_1') \gets \texttt{ExtractMax}(T_1)$\;
    \Return \texttt{JoinWithPivot}$(T_1',x,T_2)$\;
  }{
    $(x,T_2') \gets \texttt{ExtractMin}(T_2)$\;
    \Return \texttt{JoinWithPivot}$(T_1,x,T_2')$\;
  }
}
\end{algorithm}

\begin{algorithm}[H]
\caption{\texttt{Split}$(T,k_{\text{split}})$ \quad (Strict invariant preservation via pivoting)}\label{alg:split-fixed}
\DontPrintSemicolon
\KwIn{AVL root $T$, split key $k_{\text{split}}$}
\KwOut{$(T_{\text{left}},T_{\text{right}})$ with all keys in $T_{\text{left}}<k_{\text{split}}\le$ all keys in $T_{\text{right}}$}

\SetKwProg{Fn}{Function}{:}{}
\Fn{\texttt{Split}$(T,k_{\text{split}})$}{
  \If{$T=\varnothing$}{\Return $(\varnothing,\varnothing)$\;}
  
  \tcp{Push tags down to children before dismantling the node}
  \texttt{PushDown}$(T)$;\ \texttt{Update}$(T)$\;
  
  \tcp{1. Decouple children to create valid, independent subtrees}
  $L_{\mathrm{orig}} \gets T.\text{left}$;\quad $R_{\mathrm{orig}} \gets T.\text{right}$\;
  \If{$L_{\mathrm{orig}}\neq\varnothing$}{$L_{\mathrm{orig}}.\text{parent}\gets\varnothing$\;}
  \If{$R_{\mathrm{orig}}\neq\varnothing$}{$R_{\mathrm{orig}}.\text{parent}\gets\varnothing$\;}
  
  \tcp{2. Isolate the root to act as a clean pivot node}
  $T \gets \texttt{MakeSingleton}(T)$\;

  \tcp{3. Recursively split the appropriate child and merge safely using the pivot}
  \uIf{$k_{\text{split}} \le T.\text{key}$}{
    $(L, M) \gets \texttt{Split}(L_{\mathrm{orig}}, k_{\text{split}})$\;
    $T_{\text{right}} \gets \texttt{JoinWithPivot}(M, T, R_{\mathrm{orig}})$\;
    \Return $(L, T_{\text{right}})$\;
  }
  \Else{
    $(M, R) \gets \texttt{Split}(R_{\mathrm{orig}}, k_{\text{split}})$\;
    $T_{\text{left}} \gets \texttt{JoinWithPivot}(L_{\mathrm{orig}}, T, M)$\;
    \Return $(T_{\text{left}}, R)$\;
  }
}
\end{algorithm}
\endgroup

\paragraph{Time complexity.}
The strict bounding of AVL properties natively absorbs our mathematical augmentations. \texttt{PushDown} and \texttt{Update} evaluate in $O(1)$ time by strictly limiting touches to a parent and its immediate children. Consequently, incorporating them into a standard root-to-leaf traversal does not impact the basic search complexity of $O(\log n)$. Bulk lazy tags (\texttt{AddHoldingCosts}, \texttt{BulkUpdateQAtP2}) perform an $O(1)$ constant-time write at the local root of a partition. 

For the structural operations, \texttt{JoinWithPivot} bridges subtrees of disparate heights by walking down the spine of the taller tree. The walk requires $O(|h_1 - h_2|)$ bounds. When \texttt{Split} is called, it dismantles the tree along a single path of length $O(\log n)$. As the recursion unwinds, \texttt{JoinWithPivot} is used repeatedly. Crucially, the height differences bridged by \texttt{JoinWithPivot} across the recursion form a \textbf{telescoping sum}. Since the sum of height differences along any root-to-leaf path is strictly bounded by the total height of the original tree, the aggregate time spent joining fragments during a split evaluates strictly to $O(h_{\text{root}})$. Therefore, both \texttt{Split} and \texttt{Join} execute in $O(\log n)$ worst-case time (see Brass \cite{brass2008advanced}, pp.~145--147, for formal proofs on AVL splits via telescoping bounds).
\begin{algorithm}[H]
\caption{Baseline dynamic program (naive)}\label{alg:naive}
\DontPrintSemicolon
$\DP(t,c)$ represents the minimum cost to reach period $t$ and end period $t$
(within capacity) with $c$ units of fuel/inventory remaining.\;
We assume that if a state from a previous period is still optimal at the current period,
it is kept even if there is another way to generate the same optimal value.\;
We also assume that when $p_{1,t}=p_{2,t}$, the discounted tier $p_{2,t}$ is preferred whenever feasible.\;

\BlankLine
For every period $t \in \{1,\ldots,n\}$ and every feasible fuel level $0 \le c \le B(t)$:
\[
\DP(t,c)
=
\min_{\substack{0 \le x_t \\ 0 \le c' \le B(t-1) \\ c' + x_t - d_t = c}}
\left(\DP(t-1,c') + p_t(x_t) + h_t\,c\right).
\]
with initial conditions:
\[
\DP(0,0)=0,\qquad \DP(0,c\neq 0)=+\infty.
\]
where the pricing function is:
\[
p_t(x_t)=
\begin{cases}
p_{1,t}\,x_t, & x_t < Q,\\[4pt]
p_{2,t}\,x_t, & x_t \ge Q \quad (p_{2,t}\le p_{1,t}).
\end{cases}
\]
A concrete ordering decision $(c',x_t)$ is characterized by the period $t$ and the price tier
$\alpha \in \{1\text{ (using }p_{1,t}),\,2\text{ (using }p_{2,t})\}$.\;

\BlankLine
\textbf{Note:} The constraint $c' + x_t - d_t = c$ means that to end period $t$ with $c$ units
(after satisfying demand $d_t$), we start with $c'$ units and order $x_t$ units.\;
\end{algorithm}
\end{document}